\newlength\titlebox \setlength\titlebox{2.25in}
\long\gdef\affiliations #1{\def \affiliations_{#1}}%
\def\maketitle{%
  \par%
  \begingroup
    \def\thefootnote{\fnsymbol{footnote}}%
    \twocolumn[\@maketitle]\@thanks%
  \endgroup%
  \setcounter{footnote}{0}%
  \let\maketitle\relax%
  \let\@maketitle\relax%
  \gdef\@thanks{}%
  \gdef\@author{}%
  \gdef\@title{}%
  \let\thanks\relax%
}%
\def\@maketitle{%
  \newcounter{eqfn}\setcounter{eqfn}{0}%
  \newsavebox{\titlearea}%
  \sbox{\titlearea}{%
    \let\footnote\relax\let\thanks\relax%
    \setcounter{footnote}{0}%
    \vbox{%
      \hsize\textwidth%
      \linewidth\hsize%
      \vskip 0.625in minus 0.125in%
      \centering%
      {\LARGE\bf \@title \par}%
      \vskip 0.1in plus 0.5fil minus 0.05in%
      {\Large{\textbf{\@author\ifhmode\\\fi}}}%
      \vskip .2em plus 0.25fil%
      {\normalsize \affiliations_\ifhmode\\\fi}%
      \vskip 1em plus 2fil%
    }%
  }%
  \newlength\actualheight%
  \settoheight{\actualheight}{\usebox{\titlearea}}%
  \ifdim\actualheight>\titlebox%
    \setlength{\titlebox}{\actualheight}%
  \fi%
  \vbox to \titlebox{%
    \hsize\textwidth%
    \linewidth\hsize%
    \vskip 0.625in minus 0.125in%
    \centering%
    {\LARGE\bf \@title \par}%
    \vskip 0.1in plus 0.5fil minus 0.05in%
    {\Large{\textbf{\@author\ifhmode\\\fi}}}%
    \vskip .2em plus 0.25fil%
    {\normalsize \affiliations_\ifhmode\\\fi}%
    \vskip 1em plus 2fil%
  }%
}%
\renewenvironment{abstract}{%
  \centerline{\bf Abstract}%
  \vspace{0.5ex}%
  \setlength{\leftmargini}{10pt}%
  \begin{quote}\small%
}{%
  \end{quote}%
  \vskip 1ex%
}%
\def\section{\@startsection {section}{1}{\z@}{-2.0ex plus
-0.5ex minus -.2ex}{3pt plus 2pt minus 1pt}{\Large\bf\centering}}
\def\subsection{\@startsection{subsection}{2}{\z@}{-2.0ex plus
-0.5ex minus -.2ex}{3pt plus 2pt minus 1pt}{\large\bf\raggedright}}
\def\subsubsection{\@startsection{subparagraph}{3}{\z@}{-2pt plus
-1pt minus -1pt}{-1em}{\normalsize\bf}}
\renewcommand\paragraph{\@startsection{paragraph}{4}{\z@}{-6pt plus -2pt minus -1pt}{-1em}{\normalsize\bf}}%
\def\footnoterule{\kern-3pt \hrule width 5pc \kern 2.6pt}
\def\normalsize{\@setfontsize\normalsize\@xpt{11}}
\def\small{\@setfontsize\small\@ixpt{10}}
\def\footnotesize{\@setfontsize\footnotesize\@ixpt{10}}
\def\scriptsize{\@setfontsize\scriptsize\@viipt{10}}
\def\tiny{\@setfontsize\tiny\@vipt{7}}
\def\large{\@setfontsize\large\@xipt{12}}
\def\Large{\@setfontsize\Large\@xiipt{14}}
\def\LARGE{\@setfontsize\LARGE\@xivpt{16}}
\def\huge{\@setfontsize\huge\@xviipt{20}}
\def\Huge{\@setfontsize\Huge\@xxpt{23}}
\newcommand{\Amax}{A^{\max}}
\newcommand{\Apos}{A^{+}}
\newtheorem{theorem}{Theorem}[section]
\newtheorem{lemma}[theorem]{Lemma}
\newtheorem{proposition}[theorem]{Proposition}
\newtheorem{corollary}[theorem]{Corollary}
\theoremstyle{definition}
\newtheorem{definition}[theorem]{Definition}
\newtheorem{example}[theorem]{Example}
\DeclareMathOperator{\supp}{supp}
\DeclareMathOperator*{\argmax}{\arg \max}
\title{Efficiently Computing Equilibria in Budget-Aggregation Games}
\author{
Patrick Becker\textsuperscript{\rm 1},
Alexander Fries\textsuperscript{\rm 1},
Matthias Greger\textsuperscript{\rm 2},
Erel Segal-Halevi\textsuperscript{\rm 3}
}
\begin{document}
\maketitle
\begin{abstract}    
    Budget aggregation deals with the social choice problem of distributing an exogenously given budget among a set of public projects, given agents' preferences. Taking a game-theoretic perspective, we study \emph{budget-aggregation games} where each agent has virtual decision power over some fraction of the budget. We investigate the structure and show efficient computability of Nash equilibria for various common preference models in this setting. In particular, we show that equilibria for Leontief utilities can be found in polynomial time, solving an open problem from \citet{BGSS23b}, and give an explicit polynomial-time algorithm for computing equilibria for $\ell_1$ preferences. 
\end{abstract}

\section{Introduction}
\label{sec:introduction}
\emph{Participatory budgeting} \citep{Caba04a,AzSh20a} is a democratic process in which citizens are allowed to vote on how the funds of their city or state should be distributed. 
It is justified both by ethical arguments (people have a fundamental right to decide how their money is used) and by practical arguments (letting people decide on the budget strengthens cohesion in the community).

This approach can be taken to the extreme: instead of just letting citizens vote on which projects they approve (as is commonly done today), one can give each citizen their own (virtual) share of the budget (e.g. $1/n$ of the total budget, in case there are $n$ citizens with equal entitlements), and allow them to decide how to allocate that share among the potential projects. 
This can be an attractive approach, as it gives full power to the citizens and ensures some degree of fairness by splitting this power equally among them. Furthermore, each citizen is able to directly observe her influence on the outcome by means of the distribution of her individual budget. 

However, this approach raises severe strategic problems. For example, if a citizen thinks that projects $x$ and $y$ should each receive $0.5$ of the total budget, but later observed that most other citizens allocated their shares to $x$, then she would have preferred to spend her entire budget on $y$, leaving her regretful since her budget was spent sub-optimally in hindsight. 
This bad user experience can be avoided if we have the possibility to coordinate individual budgets by advising each citizen in advance how to distribute their own budget such that for each citizen, the advised distribution is optimal, given that all other citizens follow the provided advice.

In other words, we would like to compute a \emph{Nash equilibrium (NE)} of the game in which the players are citizens, and the set of strategies of each player is the set of possible distributions of their share of the budget.
We call this game the \emph{budget-aggregation game}.
An NE of the budget-aggregation game hits a sweet spot between giving complete autonomy to the citizens to distribute their shares of the budget and ensuring that they are indeed satisfied with their own choice.
More specifically, every agent receives at least their maximin share, a property studied in fair division \citep[see, e.g.,][]{BCE+15a}. In words, an agent receives the highest possible utility they can secure with their share of the budget under the least favorable allocation of the remaining shares.

\paragraph{Our contributions}
We propose a new class of normal-form games, called budget-aggregation games, which model (equal-resource) public funding situations. We position these games within the broader framework of normal-form games and survey existing results on the computation of NEs.
On top of that, our paper analyzes NEs for the most common utility functions that have been considered in participatory budgeting and related fields.
We prove theorems on the structure of NEs and develop algorithms to efficiently compute them. 
We focus on the following utility models:

\begin{itemize}[noitemsep,topsep=0pt]
\item For \textit{linear utilities}, we show that an NE can be computed by solving a linear program (\Cref{sec:linear}).
\item For \textit{Leontief utilities}, we show that an NE can be computed in polynomial time using a variant of the ellipsoid method, solving an open problem from \citet{BGSS23b} (\Cref{sec:Leontief}).
\item For \textit{binary symmetric separable utilities}, we explore strategic similarities to Leontief utilities to show that an NE can be found in polynomial time.
(\Cref{sec:symsep}).
\item For \textit{$\ell_1$ disutilities}, we again present an intuitive polynomial-time algorithm to compute an NE (\Cref{sec:l1_preferences}).
\end{itemize}
Finally, we extend the basic model by allowing agents to have different weights.
We show that, for most of the above utility models (namely linear, Leontief and $\ell_1$), an NE can still be computed in pseudo-polynomial time (\Cref{sec:weighted}).

\section{Related Work}
About $75$ years ago, two groundbreaking papers by John Nash \citep{Nash50a,Nash51a} proved the existence of an NE in general normal-form games. Apart from its intrinsic value, the proof technique of applying fixed-point theorems, especially Kakutani's fixed-point theorem (which Nash attributed to David Gale), paved the way for other equilibrium existence theorems like \citeauthor{Debr52a}'s social equilibrium \citep{Debr52a} or general equilibrium in the Arrow-Debreu model \citep{ArDe54a,McKe59a}.

An important generalization of Nash's theorem involves the notion of a \emph{concave game} \citep{rosen1965existence}. A concave game is defined by, for each player $i$, a convex set 
$(S_i)_{i\in N}$
that represents the possible strategies for $i$, and a utility function 
$u_i$
that maps each strategy profile to a real number; $u_i$ can be any function that is continuous in all strategies, and concave in the strategy of $i$.%
\footnote{The definition in \citet{rosen1965existence} is even more general and allows \emph{coupled constraints}, that is, the set of strategies available to each agent may depend on the choices made by other agents. 
We do not need this generalization here.
}
Nash's game is a special case in which every $S_i$ is a standard simplex in some Euclidean space (representing the set of all lotteries over pure actions), and every $u_i$ is a linear combination of products of $n$ terms.
Our budget-aggregation game is a special case in which every $S_i$ is (still) a simplex, but the utility functions are substantially different, due to the interpretation as a distribution of exogenously given budget rather than a lottery.

\subsection{Computing an equilibrium is hard}

Like most fixed-point-based proofs, the ones by \citet{Nash50a,Nash51a} and \citet{rosen1965existence}
do not yield efficient algorithms for finding an equilibrium. 
Moreover, there are 3-player games with integer payoffs with a unique (mixed) NE in which all probabilities are irrational numbers \citep{Nash51a,bilo2014complexity}, and 4-player games with a unique NE in which all probabilities are \emph{irradical} numbers (cannot be expressed using integer roots) \citep{orzech2025nash}. This shows that no finite algorithm can compute an exact NE in general games.

Even computing an $\epsilon$-approximate NE probably cannot be done in polynomial time, as it is PPAD-hard even for two players. This was first proven for $\epsilon$ that shrinks exponentially in the number of actions \citep{chen2006settling}, then for $\epsilon$ that shrinks polynomially \citep{chen2006computing}, and later even for some constant $\epsilon>0$ \citep{rubinstein2015inapproximability}. 
Computing an equilibrium is PPAD-complete, also for some other classes of concave games
\citep{papadimitriou2023computational}.
Many natural decision problems regarding the existence of a NE with certain properties are NP-hard for two-player games, and $\exists \mathbb{R}$-hard for multi-player games \citep{bilo2014complexity}.

However, there are other classes of games for which an NE can be computed efficiently. 
We survey some of them below.

\subsection{Nash games}
For the original game studied by Nash (where the strategy set of each agent is the set of lotteries over a finite set of actions),
many practical algorithms for computing an NE were designed, such as the ones by \citet{lemke1964equilibrium} and \citet{porter2008simple}, but without a polynomial time guarantee.

Polynomial-time algorithms for $\epsilon$-approximate NE (where no agent can increase her utility by more than $\epsilon$ by deviating, where the utilities are normalized to $[0,1]$) are known for $\epsilon\approx 0.34$ for two players \citep{tsaknakis2007optimization}, $\epsilon\approx 0.5$ for $n$-player polymatrix games, and $\epsilon > 0.6$ for general $n$-player games with $n\geq 3$ \citep{deligkas2017computing}.
~
Some special classes of games in which an exact NE can be computed in polynomial time are:
\begin{itemize}[noitemsep,topsep=0pt]
\item Two-player zero-sum games \citep{vonneumann1928theorie};
\item Anonymous games in which each player has a fixed small number of strategies and Lipschitz-continuous utility functions \citep{daskalakis2007computing,daskalakis2015approximate,cheng2017playing};
\item Graphical games in which the graph (representing the influence between players) is a tree \citep{kearns2001graphical} or a general graph with degree 2 \citep{elkind2006nash}; 
\item Win-lose games with at most 2 winning positions per action of each player \citep{codenotti2006efficient};
\item Two-player games in which the payoff matrices have a bounded rank \citep{lipton2003playing,adsul2011rank}.
\end{itemize}

\subsection{Aggregative games}
An \emph{aggregative game} is a game in which the utility of each player $i$ depends only on the strategy of $i$, and on some aggregate function of all strategies. Our game corresponds to the special case in which that aggregate function is a sum.

Aggregative games were initially studied for one-dimensional strategy sets (each agent chooses a number) and one-dimensional aggregators. 
Under certain conditions, such games admit a \emph{potential function},%
\footnote{Roughly, a potential function is a function of the strategy profile, whose value increases when any player makes an improving move. A weaker definition only requires the potential to increase when any player makes a best response.}
and thus always have an NE
\citep{kukushkin1994fixed, dubey2006strategic, cornes2012fully, jensen2018aggregative}. 
There are also efficient algorithms for computing an NE when players have discrete strategy sets \citep{kearns2002efficient}.

For aggregative games with continuous strategy sets, \citet{babichenko2018fast} 
proves that, for every $\epsilon>0$, there exists a best-reply dynamics that reaches an $O(\lambda \epsilon)$-approximate NE in time (number of steps) $O(n \log n / \epsilon^2)$, where $\lambda$ is the Lipschitz constant of the agents' utility functions.
This result is applicable to a special case of our budget-aggregation game in which there are only $m=2$ issues, as in that case, the strategy of each player is completely determined by the amount he puts on the first issue (the rest of his share automatically goes to the second issue).
As far as we know, it is open whether an \emph{exact} NE for $m=2$ can be computed in polynomial time for this general class of games.

\citet{cummings2015privacy} study aggregative games with a multi-dimensional aggregator (as in our game), but the players' strategy sets are finite (unlike our game). For this class, they present an algorithm that computes an NE in time polynomial in the number of players and actions, but exponential in the number of dimensions of the aggregator.

When both players' strategy sets and the aggregator are multi-dimensional, one can still prove that, under the required assumptions, they have a potential function, and therefore have an NE \citep{jensen2010aggregative}. However, computing the NE becomes harder. 
The literature focuses on numeric, distributed algorithms that converge asymptotically to an NE \citep[e.g.,][]{parise2020distributed,wang2024differentially},
see \citet{li2025survey} for a recent survey.
We could not find an algorithm with a polynomial run-time guarantee.

Against this background, our contribution is identifying a natural and practical class of aggregative games, for which an exact NE can be computed efficiently.
In the following, different types of aggregative games are covered.

\subsubsection{Congestion games}
These games are a subclass of aggregative games with multidimensional aggregators, where each coordinate of the aggregator represents the congestion on one of the resources.
Usually, the players' strategy sets are \emph{discrete}; each player can choose a subset of resources to use (lotteries are not allowed). Equilibria without lotteries are called \emph{pure} Nash equilibria (PNE).

Even though Nash's original existence proof is not valid in general for discrete strategy sets, congestion games (CGs) always have a potential function, so every sequence of individual improvements must terminate in a PNE.
However, in general, this sequence might be exponentially long.
Specifically, computing a PNE in general CGs is PLS-complete, which means that it probably cannot be done by a polynomial-time algorithm \citep{fabrikant2004complexity}.%
\footnote{
PLS is similar to PPAD in that both contain problems in which a solution is guaranteed to exist --- both are subsets of TFNP. It is believed that none of them contains the other, and that none of them is contained in P.}

Some special classes of CGs in which a PNE can be computed or approximated in polynomial time are:
\begin{itemize}[noitemsep,topsep=0pt]
\item Symmetric network CGs:
all players have the same set of possible strategies, and this set is the set of paths between a source node and a target node in a given network \citep{fabrikant2004complexity}.
\item Nonatomic CGs: CGs in which there is a continuum of players, where each player's effect on the outcome is negligible (in contrast to our setting, where each player has a substantial effect on the outcome).
Finding a PNE in a nonatomic CG can be reduced to solving a convex optimization problem, which can be done in weakly polynomial time.
In the special case of a \textit{network} nonatomic CG with Lipschitz-continuous utility functions, there is an FPTAS (an algorithm computing an $\epsilon$-approximate PNE in time strongly-polynomial in the problem size and $1/\epsilon$). \citep{fabrikant2004complexity}.
\item CGs in which the strategy space of each player consists of the bases of a matroid over the set of resources \citep{ackermann2008impact}.
\item For symmetric (non-network) CGs, an FPTAS based on simple greedy dynamics \citep{chien2011convergence}.
\item For general CGs, when the delay functions are polynomials of bounded degree, there are dynamics based on clever sequences of greedy steps that obtain a constant-factor approximate PNE \citep{caragiannis2015approximate}.
\end{itemize}

The CG class most closely related to our budget-aggregation game is the class of \emph{splittable (atomic) CGs}  \citep{orda2002competitive,roughgarden2015local}, in which each agent controls a divisible load and can split it arbitrarily among different paths in a network. 
The players' utilities are decreasing functions of the total load allocated to edges that they use: $u_i(\delta) = - \sum_{x\in A} \delta_{i,x}\cdot c_{i,x}(\delta_x)$, where $c_{i,x}$ is the congestion function of edge $x$ for player $i$. An important special case is when the congestion functions are affine, $c_{i,x}(\delta_x) = a_{i,x}\cdot \delta_x + b_{i,x}$. 
\citet{klimm2025complexity} prove that, even with affine congestion functions,  computing a PNE is PPAD-complete. Still, polynomial-time algorithms are known for some special cases.  If the congestion functions are affine and player-independent, an approximate PNE can be computed by convex programming for general graphs \citep{cominetti2009impact}; for specific classes of graphs called "well-designed", an exact PNE can be computed by a combinatorial algorithm \citep{huang2013collusion}. When strategy sets are singletons --- the strategy set of each player is a set of edges, rather than a set of paths from a source to a target --- there is a polynomial-time algorithm for affine player-specific congestion functions \citep{harks2022equilibrium}; for convex player-independent congestion functions, there are algorithms that are polynomial when either the number of players or the number of edges is fixed \citep{bhaskar2018equilibrium}.

Note that the utility functions are different than our linear utility functions in that the (negative) utility that a player derives from an edge is multiplied by the player's own contribution to that edge. In other words, the dependence of $u_i(\delta)$ on $\delta_{i,x}$ is quadratic, as $\delta_{i,x}$ both multiplies $c_{i,x}$ and contributes to $\delta_x$.

\subsubsection{Budget-aggregation games with approval preferences}

The need to study NEs in budget-aggregation games is also reflected by the fact that many fairness-motivated solution concepts in participatory budgeting and multiwinner voting \citep[see, e.g.,][]{LaSk23a} are related to NEs, especially when agents have approval preferences, i.e., linear utilities with binary valuations, also called dichotomous utilities \citep{BMS05a}. For the uncapped setting of budget-aggregation, when projects can receive arbitrary amounts of the budget, \citet{BBP+19a} coined the term ``decomposability'' for distributions of the overall budget that admit a decomposition into distributions of the individual budgets where each agent spends her budget optimally, i.e., only on approved projects. 
The standard model in participatory budgeting assumes that projects have fixed costs. For such problems, common rules like the \textit{method of equal shares} \citep{PeSk20a,PPS21a} or \textit{Phragm\'en's rule} \citep{Phra99a,Jans16a} rely on the same idea that agents successively allocate their budgets to fund approved projects.
Most of these works take a ``mechanism design perspective'' where agents report their preferences, which are then aggregated into a collective outcome, instead of individual budget distributions.
A notable exception is \emph{donor coordination} \citep[see, e.g.,][]{FPP+19a} where the budget is initially owned by the agents. \citet{BGSS23b,BGSS23a} proved that, with Leontief utilities, the unique equilibrium distribution is the unique distribution that maximizes the product of agents' utilities where agents are weighted by their contributions.
For the special case of multiwinner approval voting (where all projects have the same costs and correspond to candidates), \citet{HKMS24a} recently defined a \emph{budgeting game}, which is very similar to our budget-aggregation game. In their game, as in ours, each player has a fixed budget, and can choose a distribution of this budget among $m$ issues.
But in contrast to our game, their issues are \emph{discrete}; each issue is \emph{active} if its total allotment is at least $1$ and \emph{inactive} otherwise. The utility of a player is the number of active issues that the player approves. 
They prove that their budgeting game always has an NE, but do not elaborate on its computational aspects.
Moreover, returning NEs already ensures some degree of fairness as they satisfy \textit{individual fair share} \citep{BMS05a,ABM20a} for arbitrary preference models (see \Cref{prop:NashIFS}).

\subsection{Computing market equilibria}
Another equilibrium concept which has been extensively studied from a computational perspective is \emph{competitive equilibrium (CE)} in markets for private commodities, such as a \emph{Fisher market} (one seller and many buyers) or the more general \emph{exchange market} (every agent can both sell and buy; see e.g. \citet{vazirani2007combinatorial} for the definitions). 

As in the case of NE, the existence of CE in very general settings has been proved using fixed-point theorems \citep{Debr52a,ArDe54a,McKe59a};
this lead to the development of approximation algorithms without polynomial-time guarantee \citep{scarf1967computation,merrill1972applications}.

Later, it has been proved that computing an approximate CE is PPAD-hard in general \citep{Papa94b}, and even for some special cases such as exchange markets with Leontief utilities \citep{devanur2008marketdual}, exchange markets with separable piecewise-linear concave utilities \citep{chen2009settling}, Fisher markets with such utilities, \citep{chen2009spending} and exchange markets with negative linear utilities \citep{chaudhury2020dividing}.
Computing an exact CE is FIXP-hard in general \citep{etessami2010complexity}, and even for the special case of Leontief utilities \citep{garg2017settling}.

However, CE can be computed in polynomial time for some special cases (some of these cases require either the number of goods $m$ or the number of agents $n$ to be fixed):
\begin{itemize}[noitemsep,topsep=0pt]
\item Exchange markets with Cobb-Douglas utilities \citep{curtis1985finite,curtis2009finite};
\item Exchange markets with linear utilities when $m$ is fixed \citep{deng2002complexity};
\item Fisher markets with Leontief utilities \citep{codenotti2004efficient};
\item Exchange markets with constant elasticity of substitution (CES), with elasticity at least $1/2$ \citep{codenotti2005market};
\item Exchange markets with weak gross substitute utilities \citep{codenotti2005polynomial} --- a generalization of linear, Cobb-Douglas and CES;
\item Fisher markets with logarithmic utilities when either $n$ or $m$ is fixed \citep{chen2004fisher};
\item Exchange markets with piecewise-linear concave utilities when $m$ is fixed \citep{devanur2008marketfixed};
\item Exchange markets with Leontief utilities when $n$ is fixed \citep{garg2017settling};
\item Fisher markets with linear utilities \citep{devanur2008marketdual,orlin2010improved};
\item Fisher markets with linear utilities that are negative \citep{branzei2019algorithms} or even a mixture of positive and negative \citep{garg2020computing}, when either $m$ or $n$ is fixed;
\item Fisher markets with separable piecewise-linear concave negative functions \citep{chaudhury2021competitive}.
\end{itemize}
Most related to our work is an algorithm developed by \citet{Jain07a} for exchange markets with linear utilities; we use some of its theorems in our proofs for Leontief utilities.
He first developed a convex program that characterizes all CEs, and then showed how it can be solved efficiently using a variant of the ellipsoid method. Soon afterwards,
\citet{ye2008path} showed how to solve this program using interior-point methods, which are more efficient in practice
(see also \citet{codenotti2004computation} for a survey of similar methods).

\section{Model}
\label{sec:model}
Let $N$ be a set of $n$ agents and $A$ be a set of $m$ (pure) alternatives. The outcome space $\Delta(A)$ is formed by all distributions $\delta \in [0,1]^m$ with $\sum_{x \in A}\delta_x=1$. 
For a distribution $\delta$, we denote its support by $\supp(\delta) := \{x\in A: \delta_x>0\}$.

In a budget-aggregation game, the budget (w.l.o.g., normalized to $1$) is equally divided among the agents, and each of them is allowed to distribute their share among the alternatives.
The set of strategies available to each player is thus $S_i=\{\delta_i \in [0,1]^m \text{ such that } \sum_{x \in A}\delta_{i,x}=1/n\}$, which is an $(m-1)$-dimensional simplex in $\mathbb{R}^m$.

\begin{definition}
\label[definition]{def:game}
A budget-aggregation game is defined as the tuple $(N, A,(u_i)_{i \in N})$ consisting of
\begin{itemize}[noitemsep,topsep=0pt]
\item a set of agents $N$ with $n=|N|$,
\item a set of alternatives (issues) $A$,
\item and $n$ individual utility functions $u_i \colon \Delta(A) \to \mathbb{R}$ for all $i\in N$.
\end{itemize}
\end{definition}
The strategy of agent $i$, denoted $\delta_i=(\delta_{i,x})_{x \in A}$, implies that agent $i$ allocates $\delta_{i,x}$ to alternative $x$; these strategies are called \emph{individual distributions}.
We denote the \emph{overall distribution} by $\delta := \sum_{i \in N}\delta_i$.
Note that agents only care about the overall distribution of the budget, which is why the domain of $u_i$ is $\Delta(A)$ instead of $S_1\times\cdots\times S_n$. 
Furthermore, let $S_{-i}:=S_1 \times \dots \times S_{i-1}\times S_{i+1}\times \dots \times S_n$ be the space of reduced strategy profiles where we exclude agent $i$.

\begin{definition}
\label[definition]{def:Nasheq}
A strategy profile $(\delta^*_i)_{i \in N}$ is called a \emph{Nash equilibrium} if for all $i \in N$,
\begin{align*}
u_i(\delta^*)=\max_{\delta_i \in S_i} u_i\left(\sum_{j \in N \setminus i}\delta^*_j+\delta_i\right)\text{.}
\end{align*}

The sum $\delta^* = \sum_{i \in N} \delta^*_i$ is called an \emph{equilibrium distribution}.
\end{definition}

Interestingly, NEs already guarantee some degree of fairness, as they satisfy a natural fairness notion called \emph{individual fair share}.
\begin{definition}
\label[definition]{def:IFS}
A strategy profile $(\delta_i)_{i \in N}$ satisfies \emph{individual fair share} if for all $i \in N$,
\begin{align*}
u_i\left(\sum_{j \in N}\delta_j\right) \geq \max_{\delta'_i \in S_i}\min_{(\delta'_j)_{j \neq i}\in S_{-i}}u_i\left(\sum_{j \in N}\delta'_j\right)\text{.}
\end{align*}
\end{definition}
In words, individual fair share ensures that each agent obtains at least as much utility as she could guarantee on her own, by allocating her budget independently of the others.

\begin{proposition}\label{prop:NashIFS}
Every Nash equilibrium $(\delta^*_i)_{i \in N}$ satisfies individual fair share.
\end{proposition}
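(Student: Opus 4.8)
The plan is to fix an arbitrary agent $i \in N$ and show directly that the Nash-equilibrium utility $u_i(\delta^*)$ dominates the maximin value appearing on the right-hand side of the individual-fair-share inequality in \Cref{def:IFS}. Since the left-hand side of that inequality is exactly $u_i(\sum_{j \in N}\delta^*_j) = u_i(\delta^*)$, it suffices to establish
\[
u_i(\delta^*) \;\geq\; \max_{\delta'_i \in S_i}\min_{(\delta'_j)_{j \neq i}\in S_{-i}} u_i\Big(\delta'_i + \sum_{j \neq i}\delta'_j\Big).
\]

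First I would invoke the Nash-equilibrium condition from \Cref{def:Nasheq}, which rewrites $u_i(\delta^*)$ as the value of agent $i$'s best response against the fixed reduced profile $(\delta^*_j)_{j \neq i}$:
\[
u_i(\delta^*) \;=\; \max_{\delta_i \in S_i} u_i\Big(\sum_{j \neq i}\delta^*_j + \delta_i\Big).
\]
The key step is the elementary observation that the particular reduced profile $(\delta^*_j)_{j \neq i}$ is one feasible element of $S_{-i}$, so for every fixed $\delta_i \in S_i$ the value attained at this profile is at least the worst-case value over all of $S_{-i}$:
\[
u_i\Big(\sum_{j \neq i}\delta^*_j + \delta_i\Big) \;\geq\; \min_{(\delta'_j)_{j \neq i}\in S_{-i}} u_i\Big(\delta_i + \sum_{j \neq i}\delta'_j\Big).
\]
Taking the maximum over $\delta_i \in S_i$ on both sides preserves the inequality; the left-hand side becomes $u_i(\delta^*)$ by the equality above, while the right-hand side is precisely the maximin term of \Cref{def:IFS} (after renaming $\delta_i$ to $\delta'_i$). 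This chain yields the claim for agent $i$, and since $i$ was arbitrary it holds for all agents, establishing individual fair share.

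I do not anticipate a genuine obstacle here: the argument is a one-directional weakening --- replacing the adversarial minimizer over $S_{-i}$ by the concrete equilibrium strategies $(\delta^*_j)_{j \neq i}$ of the other agents --- followed by a monotone maximization over $S_i$. Notably, no minimax theorem or interchange of $\max$ and $\min$ is required, and no continuity or concavity assumption on $u_i$ is used. The only point demanding minor care is notational: matching the best-response maximization in the Nash condition with the outer maximization in the definition of individual fair share, and keeping the index sets $S_i$ and $S_{-i}$ consistent throughout.
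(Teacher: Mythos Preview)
Your proposal is correct and follows essentially the same approach as the paper: apply the Nash-equilibrium equality for $u_i(\delta^*)$, then weaken by replacing the fixed reduced profile $(\delta^*_j)_{j\neq i}$ with a minimum over all of $S_{-i}$, and take the max over $S_i$. The paper's proof is the same two-line chain, just written more tersely.
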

\begin{proof}
For every NE $(\delta^*_i)_{i \in N}$,
\begin{align*}
u_i(\delta^*)&=\max_{\delta_i \in S_i} u_i\left(\sum_{j \in N \setminus i}\delta^*_j+\delta_i\right) \\
&\geq \max_{\delta_i \in S_i}\min_{(\delta_j)_{j \neq i}\in S_{-i}}u_i\left(\sum_{j \in N\setminus i}\delta_j+\delta_i\right)\text{.}  
\end{align*}
Thus, $(\delta^*_i)_{i \in N}$ satisfies individual fair share. 
\end{proof}

We are interested in understanding the structure and computing NEs for various utility models.
In the following sections, we will consider linear utilities (\Cref{sec:linear}), Leontief utilities (\Cref{sec:Leontief}), binary symmetric separable utilities (\Cref{sec:symsep}), and $\ell_1$ disutilities (\Cref{sec:l1_preferences}).

\section{Linear utilities}\label{sec:linear}
\emph{Linear utilities}, also known as \emph{additive} or \emph{von Neumann-Morgenstern utility functions}, are the most common extension of valuations for a set of pure alternatives $A$ to preferences over $\Delta(A)$. When agents' preferences satisfy four seemingly mild conditions, \citet{vNM44a} showed that they can be represented by linear utility functions. Formally, $i$ has linear utilities if for all $\delta\in \Delta(A)$,
\begin{align*}
u_i(\delta)=\sum_{x \in A}v_{i,x}\delta_x.
\end{align*}
with nonnegative valuations $(v_{i,x})_{x \in A}$ for alternatives.\footnote{Agents do not have different valuations for individual distributions $\delta_{i,x}$, $\delta_{j,x}$ to the same alternative $x$ as they only care about the total amount $\delta_x$ allocated to $x$.}

From an economic perspective, alternatives admit an interpretation as substitutes, as agent $i$ is indifferent to how the budget is divided between two alternatives $x,y$ with $v_{i,x}=v_{i,y}$.
Due to this property, the set of NEs admits an easily understandable structure.
Let $\Amax_i:= \argmax_{x \in A} v_{i,x}$ be the set of alternatives highest valued by agent $i$. 

\begin{proposition}\label{lem:StructureNashLinear}
For linear utilities, the set of NE consists of all $(\delta_i)_{i \in N}$ with $\supp(\delta_i) \subseteq \Amax_i$ for all $i \in N$.
\footnote{
If we interpret the overall distribution as a probability distribution, \Cref{lem:StructureNashLinear} implies that a distribution is in equilibrium if and only if it is the outcome of a random dictatorship, where a player is selected uniformly at random, and then selects one of his best outcomes. Random dictatorship is sometimes used as a benchmark of fairness; see, e.g., \citet{BMS05a}.
}
\end{proposition}

\begin{proof}
By \Cref{def:Nasheq}, $(\delta^*_i)_{i \in N}$ is an NE if and only if $u_i(\delta^*)=\max_{\delta_i \in S_i} u_i(\delta^*-\delta^*_i+\delta_i)$ for all $i \in N$. By linearity of $u_i$, the latter expression is equivalent to $((n-1)/n)\cdot u_i(\delta^*-\delta^*_i)+(1/n)\max_{\delta_i \in S_i}u_i(n \cdot \delta_i)$. The term $u_i(n \cdot \delta_i)$ is maximal if and only if $\supp(\delta_i) \subseteq \Amax_i$, which proves the statement.   
\end{proof}

It is straightforward to see that the set of NEs is convex. Its simple structure allows for the computation of NEs with additional properties, e.g., the ones that maximize utilitarian welfare among equilibrium distributions.\footnote{Such equilibrium distributions might still be Pareto dominated by other (non-equilibrium) distributions \citep[see, e.g.,][]{BBP+19a}.} 

Furthermore, given a distribution $\delta$, it is easy to check whether it can be decomposed into an NE by solving the following linear feasibility program with variables $d_{i,x}$ (for $i \in N$ and $x \in A$).

\begin{align*}
\delta_x &= \sum_{i \in N}d_{i,x}              && \text{for all~} x \in A;
\\
1/n &=  \sum_{x \in \Amax_i} d_{i,x}             && \text{for all~} i \in N;
\\
d_{i,x} &\ge 0 && \text{for all~} i \in N,x \in A.
\end{align*}

\section{Leontief utilities}\label{sec:Leontief}

\citet{BGSS23b,BGSS23a} introduced \emph{Leontief preferences} for contexts where alternatives share characteristics of complements --- meaning that agents only benefit when they receive all of them (in a specific ratio), rather than being able to substitute one for another.
For each agent $i$, let $\Apos_i:= \{x\in A\colon v_{i,x}>0\}$ denote the set of alternatives the agent values positively.  
Leontief preferences are represented by
\begin{align*}
u_i(\delta)=\min_{x \in \Apos_i} \delta_x/v_{i,x}.
\end{align*}

These authors characterize the set of all NEs with the help of the concept of \emph{critical alternatives}.

\begin{definition}[\citet{BGSS23a}]
Given a distribution $\delta \in \Delta(A)$, agent $i$'s set of critical alternatives is defined as
\begin{align*}
T_{\delta,i} \coloneqq 
\arg \min_{x\in \Apos_i} \frac{\delta_x}{v_{i,x}}\text{.}
\end{align*}
\end{definition}
An alternative is critical in the sense that decreasing its share also decreases the utility of agent $i$. 

\begin{lemma}[\citet{BGSS23b,BGSS23a}, Lemma 1]
\label[lemma]{lem:eq-iff-critical}
In a budget-aggregation game where all agents have Leontief utilities, 
a profile $(\delta_i)_{i\in N}$ is an NE if and only if $\supp(\delta_i) \subseteq T_{\delta,i}$ for all $i \in N$
(where $\delta := \sum_{i\in N}\delta_i$).
\end{lemma}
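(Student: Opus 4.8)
The plan is to reduce the equilibrium condition to a per-agent best-response condition and then characterize best responses directly. By \Cref{def:Nasheq}, the profile $(\delta_i)_{i\in N}$ is an NE iff, for every $i$, the strategy $\delta_i$ maximizes $u_i(\delta_{-i}+\cdot)$ over $S_i$, where $\delta_{-i}:=\sum_{j\neq i}\delta_j$. Since this condition decouples across agents, it suffices to prove, for a single agent $i$ with $\delta_{-i}$ held fixed, the equivalence: $\delta_i$ is a best response to $\delta_{-i}$ iff $\supp(\delta_i)\subseteq T_{\delta,i}$, where $\delta=\delta_{-i}+\delta_i$ and $\mu_i:=u_i(\delta)=\min_{x\in\Apos_i}\delta_x/v_{i,x}$.

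For the direction ``support $\Rightarrow$ best response,'' I would argue by contradiction via a budget-counting inequality. Suppose some $\delta_i'\in S_i$ achieves $\mu_i':=u_i(\delta_{-i}+\delta_i')>\mu_i$. Then for every critical alternative $x\in T_{\delta,i}$ we have $\delta_{-i,x}=\mu_i v_{i,x}-\delta_{i,x}$, and the requirement $\delta_{-i,x}+\delta_{i,x}'\ge\mu_i' v_{i,x}$ forces $\delta_{i,x}'\ge\delta_{i,x}+(\mu_i'-\mu_i)v_{i,x}$. Summing over $x\in T_{\delta,i}$ and using that the whole budget of agent $i$ sits on $T_{\delta,i}$ (so $\sum_{x\in T_{\delta,i}}\delta_{i,x}=1/n$) while $\sum_{x\in T_{\delta,i}}\delta_{i,x}'\le 1/n$, I obtain $0\ge(\mu_i'-\mu_i)\sum_{x\in T_{\delta,i}}v_{i,x}$. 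As $T_{\delta,i}\subseteq\Apos_i$ is nonempty, the sum of valuations is strictly positive, contradicting $\mu_i'>\mu_i$.

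For the converse ``best response $\Rightarrow$ support,'' I would prove the contrapositive by exhibiting a strictly improving deviation. Suppose some $x^*\in\supp(\delta_i)$ is not critical; note that either $x^*\in\Apos_i$ with $\delta_{x^*}/v_{i,x^*}>\mu_i$ strictly, or $x^*\notin\Apos_i$, in which case $x^*$ does not enter the minimum at all. In either case I move a small amount $\epsilon>0$ of budget off $x^*$ and redistribute it across the critical alternatives $x\in T_{\delta,i}$ in proportion to $v_{i,x}$, adding $\epsilon\, v_{i,x}/\sum_{y\in T_{\delta,i}}v_{i,y}$ to each. This raises every critical ratio by the same positive amount $\epsilon/\sum_{y\in T_{\delta,i}}v_{i,y}$, whereas for $\epsilon$ small enough the ratio at $x^*$ (if $x^*\in\Apos_i$) and every originally non-critical ratio remain strictly above $\mu_i$; removing budget from an $x^*\notin\Apos_i$ affects no ratio. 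Hence the minimum strictly increases, so $\delta_i$ was not a best response.

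Combining the two directions gives the stated equivalence for each agent, and hence for the whole profile. I expect the main obstacle to be the converse direction: the improving move must raise all critical alternatives simultaneously (raising a single one leaves the minimum unchanged when $|T_{\delta,i}|>1$), and one must check that the threshold $\epsilon$ can be chosen uniformly small enough that no previously non-critical alternative---including the depleted $x^*$---drops to or below the new minimum. The ``support $\Rightarrow$ best response'' direction, by contrast, is a clean counting argument once the per-alternative budget bound on $T_{\delta,i}$ is written down.
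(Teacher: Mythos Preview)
Your argument is correct. Both directions are sound: the budget-counting contradiction for ``support $\Rightarrow$ best response'' is tight and clean, and the explicit $\epsilon$-shift toward the critical set (proportional to the $v_{i,x}$) for the converse correctly raises all critical ratios by the same amount while---by finiteness and the strict gap at non-critical alternatives---keeping every other ratio strictly above $\mu_i$ for small $\epsilon$. The only minor constraints you should make explicit are $\epsilon\le\delta_{i,x^*}$ (to keep the deviation feasible) and, when $x^*\in\Apos_i$, $\epsilon< v_{i,x^*}\bigl(\delta_{x^*}/v_{i,x^*}-\mu_i\bigr)$; both are trivially compatible.

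As for comparison with the paper: the paper does \emph{not} supply its own proof of this lemma. It is stated with attribution to \citet{BGSS23b,balancedArXiv} (their Lemma~1) and used as a black box thereafter. So there is no in-paper proof to compare against; your self-contained argument simply fills in what the paper cites.
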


\citeauthor{BGSS23b} further showed that all NEs result in the same overall distribution (called ``equilibrium distribution''), which can be computed by solving the convex program obtained from maximizing the product of agents' utilities. 
They leave the question open as to whether the equilibrium distribution can be computed exactly in polynomial time.
The remainder of this section is dedicated to answering that question in the affirmative.

\begin{theorem}
\label{thm:Leontiefeqcomp}
For Leontief utilities, a Nash equilibrium (and the corresponding unique equilibrium distribution) can be computed in time polynomial in the binary encoding length of the input.
\end{theorem}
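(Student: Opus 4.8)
The plan is to reduce the task to exactly solving a single convex optimization problem and then decomposing its solution. By the results of \citet{BGSS23b} quoted above, every NE induces the \emph{same} overall distribution $\delta^*$, and this $\delta^*$ is the unique maximizer of the Nash product $\prod_{i\in N} u_i(\delta)$ over the simplex $\Delta(A)$. Hence it suffices to (i) compute $\delta^*$ exactly, and (ii) decompose $\delta^*$ into an individual strategy profile. For step (ii) I would imitate the linear-utilities case: once $\delta^*$ and the critical sets $T_{\delta^*,i}$ are known, \Cref{lem:eq-iff-critical} tells us that a profile is an NE iff each agent spends her whole share $1/n$ on her critical alternatives and these shares sum to $\delta^*$. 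Finding such $d_{i,x}\ge 0$ (with $d_{i,x}=0$ off $T_{\delta^*,i}$, $\sum_{x} d_{i,x}=1/n$, and $\sum_i d_{i,x}=\delta^*_x$) is a transportation-type feasibility LP solvable in polynomial time, and feasibility is guaranteed precisely because $\delta^*$ is an equilibrium distribution.

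The real work is step (i). First I would write the Nash-product maximization in its concave (Eisenberg--Gale) form: maximize $\sum_{i\in N}\tfrac1n\log t_i$ subject to $t_i\le \delta_x/v_{i,x}$ for all $i$ and $x\in\Apos_i$, together with $\sum_{x\in A}\delta_x=1$ and $\delta\ge 0$. Writing out the KKT/complementarity conditions, the multiplier on the constraint $t_i v_{i,x}\le \delta_x$ is supported exactly on the critical alternatives $T_{\delta^*,i}$, and after normalization these multipliers recover the individual spendings $d_{i,x}$ of step (ii); the stationarity conditions then coincide with the market-clearing and budget conditions of an associated \emph{linear} market. This is where I would invoke \citet{Jain07a}: his theorems show that the equilibria of such a linear system form a convex set containing a rational point of polynomially bounded bit-length, and that a variant of the ellipsoid method equipped with a separation oracle computes it exactly. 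Feeding our KKT system into this machinery yields $\delta^*$ (and the critical sets) in time polynomial in the input encoding.

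The main obstacle, and the reason the open problem was nontrivial, is \emph{exactness}. Generic convex programming, and the ellipsoid method in particular, only return an approximation of $\delta^*$, whereas the theorem demands the exact rational distribution. Overcoming this requires two ingredients: a polynomial upper bound on the bit-complexity of $\delta^*$ (obtained because, once the combinatorial pattern of tight constraints is fixed, $\delta^*$ solves a rational linear system of polynomial size, so its denominators are bounded by $2^{\mathrm{poly}}$), and a rounding step converting a sufficiently accurate ellipsoid iterate into the exact solution via simultaneous Diophantine approximation. Both ingredients are exactly what Jain's algorithm packages for linear markets, so the crux of the proof is to verify that the Leontief budget-aggregation KKT system fits the hypotheses of his theorems --- in particular that it is genuinely equivalent to a linear-market equilibrium rather than to a (computationally hard) Leontief exchange market.
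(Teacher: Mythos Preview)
Your step~(ii) and the bit-complexity argument for $\delta^*$ match the paper (its \Cref{lem:edr-rational} does exactly what you describe: fix the pattern of tight constraints and read $\delta^*$ off a rational linear system). The gap is in step~(i), precisely at the place you yourself flag as ``the crux''. You assert that the KKT system of the Eisenberg--Gale program coincides with the equilibrium conditions of an associated \emph{linear} market, so that Jain's linear-exchange algorithm applies as a black box, but you do not exhibit this reduction, and it is not clear one exists. The budget-aggregation setting is a \emph{public-goods} problem: all agents consume the same $\delta$, and there is a single simplex constraint $\sum_x \delta_x=1$ rather than one supply constraint per good. If you actually work out the KKT conditions and substitute using complementary slackness, you recover nothing more than the NE characterization of \Cref{lem:eq-iff-critical}: nonnegative $d_{i,x}$ with $\sum_x d_{i,x}=1/n$, $\sum_i d_{i,x}=\delta_x$, and $d_{i,x}>0\Rightarrow t_i v_{i,x}=\delta_x$. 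That is a linear \emph{complementarity} system, not a linear Fisher or Arrow--Debreu market with identifiable prices and private bundles; feeding it into Jain's linear-market machinery is not justified without further work that your proposal does not supply.

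The paper sidesteps this reduction entirely. It uses a different theorem of \citet{Jain07a} --- his generic ellipsoid variant (Theorem~13 there), which for any convex set $S$ given by a strong separation oracle and any bound $\phi$ either certifies that $S$ contains no point of encoding length $\le\phi$ or outputs one of length $\le P(n)\phi$. The paper applies this not to a market but to the \emph{superlevel sets} $S(z_0)=\{(u,\delta):\prod_i u_i\ge z_0^n,\ u_i v_{i,x}\le \delta_x,\ \delta\in\Delta(A),\ u\ge 0\}$, builds an explicit polynomial-time separation oracle (the only nonlinear constraint is separated via the AM--GM inequality), and then \emph{binary-searches} on $z_0$: if the oracle-ellipsoid returns ``no short point'' then $z_0$ is too high; if it returns a point, test that $\delta$ against \Cref{lem:eq-iff-critical}, and if it fails raise $z_0$. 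After at most $\phi$ iterations the search pins down $\delta^*$. Your two ingredients (bit bound plus exact recovery) are present, but packaged through binary search on level sets rather than through a market reduction --- which is exactly what lets the argument go through without the unverified step.
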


Note that the equilibrium distribution does not change when individual valuations are rescaled. 
Similarly, rescaling contributions preserves the share each alternative receives. Thus, for the sake of simplicity, we assume throughout this section that all valuations are natural numbers.
We prove that the equilibrium distribution can be computed in time poly($n, m, \log_2(v_{\max})$) where $v_{\max} \coloneqq \max_{i\in N, x\in A}v_{i,x}$, i.e., the run-time is polynomial in the binary encoding length of the input.

As a first step, we prove an upper bound on the binary encoding length of the equilibrium distribution $\delta^*$ and its corresponding utility profile (which we denote by $u^*$).
\begin{lemma}
\label[lemma]{lem:edr-rational}
If the agents' valuations $v_{i,x}$ are natural numbers, then the equilibrium distribution $\delta^*$ and its utility profile $u^*$ are rational-valued.
~
Moreover, their binary encoding length is bounded by a polynomial function of the binary encoding length of $v_{i,x}$ and $n$.
\end{lemma}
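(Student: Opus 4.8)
The plan is to reduce the determination of $\delta^*$ and its utility profile $u^*$ to solving a linear system whose coefficients are the integer valuations $v_{i,x}$, and then to invoke standard bounds on the solutions of integer linear systems. Throughout, write $u^*_i := u_i(\delta^*)$. First I would record that every $u^*_i$ is strictly positive: by spreading her own share proportionally to her valuations (putting $\delta_{i,x} = (1/n)\,v_{i,x}/\sum_{y\in\Apos_i} v_{i,y}$ on each $x \in \Apos_i$), agent $i$ can guarantee herself utility at least $(1/n)/\sum_{x\in\Apos_i} v_{i,x} > 0$ no matter what the others do, so the maximin bound of \Cref{prop:NashIFS} yields $u^*_i > 0$.

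Next I would encode the combinatorial structure of the equilibrium as a bipartite graph $G$ on the vertex set $N \cup A$, placing an edge between agent $i$ and alternative $x$ whenever $\delta^*_{i,x} > 0$. By \Cref{lem:eq-iff-critical} every edge of $G$ is critical, i.e. $x \in T_{\delta^*,i}$, which unfolds to the tightness relation
\begin{align*}
\delta^*_x = u^*_i\, v_{i,x} \qquad \text{for every edge } (i,x) \text{ of } G.
\end{align*}
Since each agent contributes only along edges of $G$, the budget is conserved within each connected component: for every connected component $C$ of $G$,
\begin{align*}
\sum_{x \in C \cap A} \delta^*_x = \frac{|C \cap N|}{n}.
\end{align*}

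The key step is to argue that, restricted to one component $C$, the unknowns $(\delta^*_x)_{x \in C\cap A}$ and $(u^*_i)_{i \in C\cap N}$ form the unique solution of a square, nonsingular linear system. Choosing a spanning tree of $C$ gives $|C|-1$ tightness equations $\delta^*_x - v_{i,x} u^*_i = 0$, which express every one of the $|C|$ unknowns as a product of quotients of the (positive) valuations along tree paths times a single reference unknown; positivity then forces the reference unknown to be positive, and the budget equation pins it down. Non-tree edges impose only consistency conditions that hold automatically at equilibrium, so they may be discarded. Collecting these $|C|$ equations gives a system $M_C z = b_C$ whose matrix $M_C$ has integer entries of absolute value at most $v_{\max}$ and whose right-hand side has entries in $\{0,\ |C\cap N|/n\}$; strict positivity of the utilities guarantees $M_C$ is nonsingular.

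Finally, I would bound the encoding length. Clearing the single denominator $n$ makes $M_C z = b_C$ an integer system, so by Cramer's rule each coordinate of $z$ is a ratio of determinants of matrices with entries bounded by $v_{\max}$ (or by $n$). Hadamard's inequality bounds each such determinant by $\big((m+n)\,v_{\max}^2\big)^{(m+n)/2}$, i.e. by $O\!\big((m+n)(\log(m+n)+\log_2 v_{\max})\big)$ bits; hence every $\delta^*_x$ and $u^*_i$ is rational with binary encoding length polynomial in $n$, $m$, and $\log_2 v_{\max}$, and therefore polynomial in the input size. Applying this to each component proves the claim. I expect the main obstacle to be the middle step, namely verifying that the per-component system really is square and nonsingular (equivalently, that the spanning-tree equations together with the budget equation are linearly independent), which is precisely where the strict positivity of the equilibrium utilities enters.
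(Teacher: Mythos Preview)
Your proposal is correct and follows essentially the same approach as the paper: both exploit the criticality relations $\delta^*_x = u^*_i\,v_{i,x}$ from \Cref{lem:eq-iff-critical} to set up, within each connected component of a graph encoding the equilibrium's combinatorial structure, a linear system with integer coefficients whose unique solution is pinned down by a budget constraint, and then invoke standard encoding-length bounds. The only cosmetic difference is that you work with a bipartite graph on $N\cup A$ (edges where $\delta^*_{i,x}>0$) and apply Cramer--Hadamard, whereas the paper uses a graph on $A$ alone (edges between alternatives sharing a critical agent) and writes out the chain-of-ratios explicitly; the paper also prepends a one-line LP-feasibility argument that already yields rationality before the explicit bound.
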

\begin{proof}
For each $i\in N$, let $T_i$ be a non-empty set of alternatives. Consider the following linear program (LP), with variables $u_i$ (for $i \in N$), $d_x$ (for $x \in A$), and $d_{i,x}$ (for $i \in N$ and $x \in T_i$):
\begin{align*}
d_x &= u_i \cdot v_{i,x}              && \text{for all~} i \in N, x \in T_i;
\\
d_x &\geq  u_i \cdot v_{i,x}              && \text{for all~} i \in N, x \in A\setminus T_i;
\\
\sum_{x \in T_i} d_{i,x} &= 1/n       &&  \text{for all~}  i \in N;
\\
\sum_{i \in N} d_{i,x} &= d_x         &&   \text{for all~} x \in A;
\\
d_{i,x}&\geq 0         &&   \text{for all~} i \in N, x \in T_i\text.
\\
\end{align*}

Every solution of this LP (if any) represents a distribution $\delta_x=d_x$ for all $x \in A$, with a decomposition $\delta_{i,x}=d_{i,x}$ for all $i,x$, such that each agent $i$ contributes only to alternatives in $T_i$, and all alternatives in $T_i$ are critical for $i$. Variables $u_i$ take the values of the corresponding utilities. By \Cref{lem:eq-iff-critical}, such a strategy profile has to coincide with an NE.

An NE $(\delta_i^*)_{i \in N}$ is a solution to the above LP whenever $T_i = T_{\delta^*,i}$ for all $i\in N$. 
By assumption, the coefficients of this LP are all rational. Therefore, by well-known properties of linear programming, the LP has a rational solution,
with binary encoding length bounded by a polynomial function of the representation length of its coefficients.

We can give an explicit bound on the representation length of the equilibrium distribution $\delta^*$.
For that, we construct an undirected graph where vertices correspond to alternatives, and there is an edge between $x,y \in A$ if and only if there exists an agent $i \in N$ with $x,y \in T_{\delta^*,i}$. 
Each component of that graph can be considered separately, as the sum of contributions to that component equals the fraction of agents for whom parts of the component are critical alternatives. 

Thus, let $A' \subseteq A$ be a subset of $m'$ alternatives forming a component and $N' \subseteq N$ the subset of agents contributing to alternatives in $A'$. Choose  some $x_1 \in A'$.
If $m'=1$ (so $x_1$ is the unique alternative in $A'$,) then $\delta^*_{x_1}=N'/N$ as $x_1$ is the unique critical alternative for all agents in $N'$.
Otherwise, 
 there needs to be at least one other alternative $x_2$ that can be reached in one step from $x_1$, i.e., there exists an agent $i_1 \in N'$ such that $\delta^*_{x_1}/v_{i_1,x_1}=\delta^*_{x_2}/v_{i_1,x_2}$. 
Hence, $\delta^*_{x_2}=(v_{i_1,x_2}/v_{i_1,x_1}) \cdot \delta^*_{x_1}$. 
Similarly, if $m'\geq 3$ then there exists another alternative $x_3 \in A'$ that can be reached in one step from either $x_1$ or $x_2$. In general, given $k<m'$ connected alternatives from $A'$, there needs to be another one that can be reached in one step from one of the $k$ alternatives. 

This gives a system of linear equations where each $\delta^*_{x_j}$ can be written in terms of $\delta^*_{x_1}$.
Specifically, $\delta^*_{x_2}$ only requires the two valuations $v_{i_1,x_1}$ and $v_{i_1,x_2}$, $\delta^*_{x_3}$ requires at most four valuations
($v_{i_1,x_1},v_{i_1,x_2}$ 
and
$v_{i_2,x_2},v_{i_2,x_3}$ 
)
, and so on. Considering this ``worst case'' in terms of representation length together with $\sum_{x \in A'}\delta^*_{x}=\lvert N'\rvert/n$, 
$\delta^*_{x_1}$ can be written as a fraction with numerator $\lvert N'\rvert/n$ and denominator $1+v_{i_1,x_2}/v_{i_1,x_1}+(v_{i_1,x_2}/v_{i_1,x_1}) \cdot (v_{i_2,x_3}/v_{i_2,x_2})+ \dots$. This results in a binary encoding length of $\log_2((\lvert N'\rvert/n)\cdot{v_{\max}}^{m-1})$ for the numerator and $\log_2(m \cdot {v_{\max}}^{m-1})$ for the denominator, where we used $m'\leq m$, and $v_{i,x}\leq v_{\max}$.

As $x_1$ was chosen arbitrarily, and Leontief utilities coincide with distributions to certain alternatives, $(\delta^*,u^*)$ can be represented by $n+m$ times the derived length for $\delta^*_{x_1}$.
\end{proof}

\Cref{lem:edr-rational} cannot be used directly for computing the equilibrium distribution in polynomial time, since the proof requires us to know $T_{\delta^*,i}$; iterating over all possible $T_{\delta^*,i}$ would require exponential time.
To prove polynomial-time computability, we leverage the following lemma, which is Theorem 13 of \citet{Jain07a}:

\begin{lemma}[\citet{Jain07a}]
\label[lemma]{lem:jain}
Let $S$ be a convex set given by a strong separation oracle, and $\phi>0$ an integer.

There is an oracle-polynomial time and $\phi$-linear time algorithm which does one of the following:

\begin{enumerate}[noitemsep,topsep=0pt]
\item Concludes that there is no point in $S$ with binary encoding length at most $\phi$, or 
\item produces a point in $S$ with binary encoding length at most $P(n) \cdot \phi$,
where $P(n)$ is a polynomial.
\end{enumerate}
\end{lemma}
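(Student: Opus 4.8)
The plan is to prove this as a statement about the power of the central-cut ellipsoid method equipped with a strong separation oracle, in the style of the Gr\"otschel--Lov\'asz--Schrijver machinery, combined with simultaneous Diophantine rounding to extract an \emph{exact} low-complexity point. Throughout, let $n$ denote the dimension of the ambient space in which $S$ lives.

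First I would localize the search: every point of binary encoding length at most $\phi$ lies in the ball $B = B(0, 2^{\phi})$, and a strong separation oracle for $S \cap B$ is obtained by combining the given oracle for $S$ with the trivial oracle for $B$. It therefore suffices to treat the bounded convex set $S \cap B$. I would run the central-cut ellipsoid method on $S \cap B$ with a volume threshold $\varepsilon = 2^{-\mathrm{poly}(n)\,\phi}$. The method makes $O(n^2(\phi + \log_2(1/\varepsilon))) = \mathrm{poly}(n)\cdot\phi$ oracle calls---hence oracle-polynomial and linear in $\phi$---and terminates in one of two ways: it either returns a point of $S\cap B$, or it returns an ellipsoid $E\supseteq S\cap B$ of volume below $\varepsilon$.

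In the first case the returned point is the center of an ellipsoid whose defining data has encoding length $\mathrm{poly}(n)\cdot\phi$, so after verifying membership with the oracle it is a valid output for case~(2). In the second case $E$ is thin: its shortest axis yields a direction $c$ and a width $w < 2^{-\mathrm{poly}(n)\,\phi}$ with $S\cap B \subseteq \{x : |\langle c, x\rangle - \gamma| \le w\}$, and any hypothetical solution $p\in S$ of encoding length at most $\phi$ lies in this slab. The key step is to replace this approximate slab by an \emph{exact} rational hyperplane: using simultaneous Diophantine approximation (via lattice basis reduction) on $(c,\gamma)$, I would produce a rational pair $(\tilde c, \tilde\gamma)$ of encoding length $\mathrm{poly}(n)\cdot\phi$ such that every rational point of encoding length at most $\phi$ lying in the slab must in fact satisfy $\langle \tilde c, x\rangle = \tilde\gamma$ exactly. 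This is where the choice of $\varepsilon$ (and hence of $w$) is forced: two distinct values of such a linear form over encoding-$\phi$ points differ by at least $2^{-\mathrm{poly}(n)\,\phi}$, so once $w$ is below this gap the slab can meet the low-complexity points only on a single rational hyperplane.

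I would then recurse on $S' := S \cap \{x : \langle \tilde c, x\rangle = \tilde\gamma\}$, which lies in an affine subspace of dimension $n-1$ and inherits a strong separation oracle from that of $S$. Since the dimension strictly decreases at each round, after at most $n$ rounds the procedure either outputs a point (case~2) or reaches a $0$-dimensional affine subspace, i.e.\ a single candidate point, which I test with the oracle: if it lies in $S$ it is output, and otherwise no point of encoding length at most $\phi$ exists, giving case~(1). The main obstacle---and the technical heart of the argument---is the Diophantine rounding together with the bookkeeping of encoding lengths across the recursion. Each of the at most $n$ rounding steps contributes only $\mathrm{poly}(n)\cdot\phi$ bits to the description of the current affine subspace, and the delicate point is to argue that these \emph{compose additively} against the \emph{original} data and bound $\phi$, so that the final point has encoding length at most $P(n)\cdot\phi$ rather than suffering a geometric blow-up of the form $P(n)^n\phi$. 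Controlling this growth, and verifying that the rounded hyperplane traps \emph{all} encoding-$\phi$ points of $S$ (not merely the ellipsoid center), is the crux on which the whole lemma rests.
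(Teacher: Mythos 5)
The paper does not prove this lemma at all --- it is imported verbatim as Theorem~13 of \citet{Jain07a}, and the proof there is exactly the Gr\"otschel--Lov\'asz--Schrijver machinery you reconstruct: run the central-cut ellipsoid method on $S$ intersected with the ball of radius $2^{\phi}$, round the thin slab by simultaneous Diophantine approximation to an exact rational hyperplane trapping all points of encoding length at most $\phi$, and recurse in dimension, the encoding lengths composing additively because each rounding is calibrated to the \emph{original} $\phi$ (so each of the at most $n$ hyperplanes costs only $\mathrm{poly}(n)\cdot\phi$ bits, independently of earlier rounds). Your proposal is correct and essentially identical to the cited source's argument, including your correct identification of the length-bookkeeping across the recursion as the one delicate point.
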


We apply \Cref{lem:jain} as follows.
For every positive rational number $z_0$, we define a convex set $S(z_0) \subseteq \mathbb{R}^{n+m}$, where the variables are $u_i$ for $i\in N$ and $d_x$ for $x\in A$:
\begin{align*}
\prod_{i=1}^n {u_i}^{1/n} &\geq z_0;
\\
d_x &\geq  u_i \cdot v_{i,x}  && \text{~for all~} i \in N, x \in A;
\\
\sum_{x \in A} d_{x} &= 1;
\\
u_i &\geq 0      &&   \text{~for all~} i \in N;
\\
d_x &\geq 0      &&   \text{~for all~} x \in A;
\end{align*}

Intuitively, the set $S(z_0)$ represents all pairs $(\delta,u)$ such that $\delta$ is a feasible distribution, 
$u$ is a lower bound on the corresponding utility profile, and the geometric mean of all utilities is at least $z_0$; see the proof of \Cref{thm:Leontiefeqcomp} below for a formal statement.
A \emph{strong separation oracle for $S(z_0)$} is a function that accepts as input a rational vector $\mathbf{y}' = (u'_1,\ldots,u'_n, d'_1,\ldots,d'_m)$. 
It should return one of two outcomes: either an assertion that $\mathbf{y}'\in S(z_0)$, or a hyperplane that separates $\mathbf{y}'$ from $S(z_0)$
(that is, a rational vector $\mathbf{c}$ such that $\mathbf{c}^\intercal \mathbf{y}' < \mathbf{c}^\intercal \mathbf{y}$ for all 
$\mathbf{y}\in S(z_0)$).

\begin{lemma}
\label[lemma]{lem:SzNull}
    For every rational $z_0>0$, there is a polynomial-time strong separation oracle for the convex set $S(z_0)$.
\end{lemma}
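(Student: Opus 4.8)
The plan is to exploit that $S(z_0)$ is the intersection of finitely many linear constraints with a single convex ``geometric-mean'' constraint, and to build the oracle by testing these constraints one at a time: given a rational query $\mathbf{y}' = (u'_1,\dots,u'_n,d'_1,\dots,d'_m)$, I check each constraint in turn and, the moment one is violated, return a rational hyperplane certifying that violation; only if all constraints hold do I assert $\mathbf{y}' \in S(z_0)$. The linear part is routine. For each inequality of the form $a^\intercal \mathbf{y} \ge b$ in the description (namely $u_i \ge 0$, $d_x \ge 0$, and $d_x - u_i v_{i,x} \ge 0$), if $a^\intercal \mathbf{y}' < b$ then $\mathbf{c} := a$ already separates, since $a^\intercal \mathbf{y} \ge b > a^\intercal \mathbf{y}'$ for every $\mathbf{y} \in S(z_0)$; for the equality $\sum_x d_x = 1$, a violation is separated by $\pm$ the all-ones vector on the $d$-coordinates, with the sign chosen according to whether $\sum_x d'_x$ exceeds or falls short of $1$. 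Each test and each resulting $\mathbf{c}$ uses only rational arithmetic on the input numbers, hence is computable in polynomial time.

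A useful preprocessing observation is that every point of $S(z_0)$ has all $u_i>0$, because the geometric mean is at least $z_0>0$. Consequently, if the query has some $u'_i \le 0$, the unit vector $e_{u_i}$ strictly separates (as $u_i > 0 \ge u'_i$ for all $\mathbf{y}\in S(z_0)$). After the linear tests I may therefore assume all $u'_i > 0$, which is exactly what is needed to treat the remaining nonlinear constraint.

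The crux, and the only genuinely nonlinear step, is the constraint $\prod_{i} (u_i)^{1/n} \ge z_0$. Here I must (i) decide membership without computing irrational $n$-th roots and (ii) output a \emph{rational} separating hyperplane when the query is infeasible. For (i), I raise both sides to the $n$-th power: since all $u'_i>0$ and $z_0>0$, one has $\prod_i u'_i \ge z_0^{\,n}$ if and only if the geometric-mean constraint holds, and both $\prod_i u'_i$ and $z_0^{\,n}$ are rationals whose encoding lengths are polynomial in $n$ and the input, so the comparison is exact and polynomial-time. If the test passes (and all linear constraints held), I assert $\mathbf{y}'\in S(z_0)$. If it fails, so that $g(u') := \prod_i (u'_i)^{1/n} < z_0$ (writing $g$ for the geometric mean), I take the rational vector $\mathbf{c} := (1/u'_1,\dots,1/u'_n,0,\dots,0)$. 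The weighted AM--GM inequality gives, for every $\mathbf{y}\in S(z_0)$, $\tfrac1n\sum_i u_i/u'_i \ge \prod_i (u_i/u'_i)^{1/n} = g(u)/g(u') \ge z_0/g(u')$, whence $\mathbf{c}^\intercal \mathbf{y} \ge n\,z_0/g(u') > n = \mathbf{c}^\intercal \mathbf{y}'$, which is precisely strong separation. The main obstacle is this last constraint, and AM--GM is what resolves both difficulties at once: it yields a hyperplane with small rational entries (no roots) together with an explicit strict gap. Collecting the cases, every branch runs in time polynomial in the bit-length of $\mathbf{y}'$, $z_0$, and the valuations, which proves the lemma.
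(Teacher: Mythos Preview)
Your proof is correct and follows essentially the same approach as the paper: check the linear constraints first, handle the degenerate case where some $u'_i$ vanishes via a coordinate vector, and for the nonlinear geometric-mean constraint use the separating vector with entries $1/u'_i$ on the $u$-coordinates together with the AM--GM inequality. The only cosmetic differences are a scalar factor of $1/n$ in the paper's choice of $\mathbf{c}$ (so the paper compares against $1$ rather than $n$) and that the paper treats the case $u'_i=0$ after the linear checks rather than as preprocessing.
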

\begin{proof}
Given a rational vector $\mathbf{y}' = (u'_1,\ldots,u'_n, d'_1,\ldots,d'_m)$, we first check whether the point satisfies the linear constraints $d'_x \geq  u'_i \cdot v_{i,x}$,  $\sum_{x \in A} d'_{x} = 1$, $u'_i \geq 0$ and $d'_x \geq 0$. If one of these constraints is violated, the constraint itself yields a separating hyperplane. 
As the number of these constraints is polynomial in $n,m$, all of them can be checked in polynomial time.

It remains to handle the case that all linear constraints are satisfied, whereas the nonlinear constraint is violated, i.e., 
\begin{align}
\label{eq:violated-constraint}
\prod_{i=1}^n u'_i < {z_0}^n.
\end{align}
As $n \in \mathbb{N}$, the above condition can be checked exactly using arithmetic operations on rational numbers.
The binary encoding length of the product is polynomial in the binary encoding length of $u_i'$, meaning that the representation stays polynomial in the input size.

To construct a separating hyperplane for this case, we use an idea similar to the one in \cite{Jain07a}.

First, if $u'_i=0$ for some $i \in N$, choose $\mathbf{c}$ as the vector with value $1$ for each variable with $u'_i=0$ and value $0$ otherwise. By construction, $\mathbf{c}^\intercal \mathbf{y}' = 0$. For every $\mathbf{y} = (u_1,\ldots,u_n,d_1,\ldots,d_m) \in S(z_0)$, $\mathbf{c}^\intercal \mathbf{y}>0$ as $\prod_{i=1}^n (u_i) \geq {z_0}^n>0$.

Otherwise, we can define the vector $\mathbf{c}$ to have the coefficient 
$(1/n)\cdot (1/u_i')$ for each variable $u_i$, and the coefficient $0$ for each variable $d_x$.
Note that the encoding length of $\mathbf{c}$ is polynomial in the input size.
For every vector $\mathbf{y} = (u_1,\ldots,u_n,d_1,\ldots,d_m)$,
\begin{align*}
\mathbf{c}^\intercal \mathbf{y} = \frac{1}{n}\cdot \sum_{i =1}^n \frac{u_i}{u'_i}.
\end{align*}

Substituting $\mathbf{y} \coloneqq \mathbf{y}'$ gives 
$\mathbf{c}^\intercal \mathbf{y}' = 1$.
We now prove that 
$\mathbf{c}^\intercal \mathbf{y} > 1$
for every $\mathbf{y}\in S(z_0)$.

Indeed, $\mathbf{c}^\intercal \mathbf{y}$ is the arithmetic mean of the $n$ positive numbers $\frac{u_i}{u_i'}$.
By the AM-GM inequality, this sum is at least as large as their geometric mean, that is 
\begin{align*}
\frac{1}{n}\cdot \sum_{i =1}^n \frac{u_i}{u'_i} 
\;\geq\; 
\prod_{i=1}^n \left(\frac{u_i}{u_i'}\right)^{1/n}
\;=\;
\frac{(\prod_{i=1}^n (u_i)^{1/n})}
{(\prod_{i=1}^n (u'_i)^{1/n})}.
\end{align*}
Since $\mathbf{y} = (u_1,\ldots,u_n,d_1,\ldots,d_m)$ is in $S(z_0)$, it satisfies the inequality 
$\prod_{i=1}^n u_i \geq {z_0}^n$.
Substituting in the right-hand side above gives:
\begin{align*}
\frac{1}{n}\cdot \sum_{i =1}^n \frac{u_i}{u'_i} 
\;\geq\; 
\frac{{z_0}}
{\prod_{i=1}^n (u'_i)^{1/n}},
\end{align*}
which is larger than $1$ by inequality \eqref{eq:violated-constraint}.
Hence, $\mathbf{c}^\intercal \mathbf{y}>1$, so $\mathbf{c}$ indeed defines a separating hyperplane.
\end{proof}

\begin{proof}[Proof of \Cref{thm:Leontiefeqcomp}]

\Cref{lem:SzNull} 
allows us to apply \Cref{lem:jain} to $S(z_0)$.
We can now compute the equilibrium distribution by applying binary search to $z_0$, as follows.

\begin{enumerate}[noitemsep,topsep=0pt]
\item Initialize $L \coloneqq $ the product of agents' utilities for some arbitrary distribution (e.g., the uniform distribution).
\item Initialize $H \coloneqq $ some upper bound on the product of agents' utilities, 
e.g., the one resulting from the (unrealistic) distribution in which each agent $i$ divides the entire budget of $1$  optimally for him (in proportion to $v_i$). 
\item Let $\phi \coloneqq $ an upper bound on the binary encoding length of $(\delta^*,u^*)$, derived in  \Cref{lem:edr-rational}.
\item 
\label{step:z0}
Let $z_0 \coloneqq (L+H)/2$.
Note that both $L$ and $H$ can be encoded in length polynomial in the input size, so the same applies to $z_0$.

\item Apply \Cref{lem:jain} to $S(z_0)$, using \Cref{lem:SzNull} for the strong separation oracle. Consider the two possible outcomes.

\paragraph{Outcome 1} (``no point in $S(z_0)$ with binary encoding length at most $\phi$''): then the vector $(\delta^*,u^*)$ is not in $S(z_0)$. 
However, the vector $(\delta^*,u^*)$ satisfies all linear constraints in the definition of $S(z_0)$. Hence, it necessarily fails the non-linear  constraint. This means that the geometric mean of agents' utilities in the equilibrium distribution is below $z_0$.
Set $H \coloneqq z_0$ and return to Step \ref{step:z0}.

\paragraph{Outcome 2} (``a point in $S(z_0)$ with binary encoding length at most $P(n)\cdot \phi$''): then the $d_x$ variables in this point define a distribution $\delta$ by which the utility of each agent $i$ is at least $u_i$, and the geometric mean of all $u_i$ is at least $z_0$; hence, the geometric mean of the agents' utilities is at least $z_0$ too.

Check whether $\delta$ is the equilibrium distribution (this can be done in polynomial time using \Cref{lem:eq-iff-critical}). If it is, return $\delta$ and finish. Otherwise, set $L \coloneqq z_0$ and return to Step \ref{step:z0}.

\end{enumerate}
As the binary encoding length of $(\delta^*,u^*)$ is at most $\phi$, the binary encoding length of the maximum product of agents' utilities is at most $\sum_{i} (1/n) \log_2(u^*_i) \leq \phi$.
Therefore, after at most $\phi$ steps, the 
binary search is guaranteed to terminate with the equilibrium distribution.

Given the equilibrium distribution, one can easily compute an NE by turning the LP in the proof of \Cref{lem:edr-rational} into a linear feasibility problem where the variables $u_i$ and $d_x$ are replaced by their respective values in the equilibrium distribution. 
\end{proof}

Together with uniqueness of the overall distribution, this implies that the set of NEs is convex. 

Moreover, \Cref{thm:Leontiefeqcomp} is applicable to other utility models where NEs coincide with the ones for Leontief utilities, e.g., Cobb-Douglas utility functions \citep{balancedArXiv}.

\section{Binary symmetric separable utilities}
\label{sec:symsep}

In this section, we investigate NEs for very general utility models when agents have separable preferences. Separability requires that agents receive utility independently from each alternative. More formally, fixing the outcome on a subset of all alternatives has no influence on an agent's preferences over how to allocate the remaining budget among the other alternatives. The interested reader is advised to consult \citet{Debr59b,BPR78a} for an axiomatic characterization of separable preferences. Here, we say that preferences are separable if they can be represented by a utility function of the form
\begin{align*}
u_i(\delta)=\sum_{x \in A}g_{i,x}(\delta_x)
\end{align*}
where $g_{i,x} \colon \mathbb{R}_{\geq 0} \to \mathbb{R}$ are continuous and weakly increasing.

Linear utilities represent the most common class of separable preferences. Contrarily, Leontief preferences are not separable as the minimum is taken over all alternatives with positive valuations.

We consider the class of \emph{binary symmetric} separable preferences where each agent $i$ has a non-empty set of approved alternatives $A_i \subseteq A$ but does not further distinguish between the alternatives.\footnote{This is what we mean by the term ``symmetric'' in our context.} Furthermore, her utility is assumed to \textit{strictly} increase when the budget on some $x \in A_i$ increases.

In terms of utility functions, such preferences are represented by
\begin{align*}
u_i(\delta) = \sum_{x \in A_i}g_i(\delta_x) 
\end{align*}
where $g_i \colon [0,1] \to \mathbb{R}$ are continuous and strictly increasing functions.

For example, setting $g_i(\delta_x)=\delta_x$ for all $i \in N$ corresponds to linear preferences with binary valuations $v_{i,x} \in \{0,1\}$ \citep[see, e.g.,][]{BMS05a}. Choosing $g_i(\delta_x)=\ln(\delta_x)$ leads to Cobb-Douglas utilities with binary weights. However, note that our model allows the $g_i$'s to differ among agents.
~
The following example shows that for such general preferences, NEs do not always exist.

\begin{example}
Consider an instance with $n=2$ and $A=\{a,b\}$, with $g_1(\cdot) = \sqrt{\cdot}$, i.e., $u_1(\delta)=\sqrt{\delta_a}+\sqrt{\delta_b}$ and $g_2(\cdot) = \cdot^2$, i.e, $u_2(\delta)=\delta_a^2+\delta_b^2$. 

If $\delta_a>\delta_b$, then $\delta_a>1/2$, hence both agents contribute to $a$.  
By strict concavity of $g_1$, Agent $1$ has an incentive to move some budget from $a$ to $b$. 
The case $\delta_a<\delta_b$ can be treated analogously.    

If $\delta_a=\delta_b$, then by strict convexity of $g_2$, Agent $2$ has an incentive to move her entire budget to an alternative $x \in \{a,b\}$ with $\delta_{1,x}\ge 1/4$, resulting in $\delta_a \neq \delta_b$. Thus, there cannot exist an NE. 

\end{example}

Intuitively, Agent $1$ wants the entire budget to be distributed as equally as possible, whereas Agent $2$ prefers to focus on one alternative.
~
Therefore, we treat separately the two cases where all $g_i$'s are strictly concave (\Cref{sec:strictlyconcave}) and strictly convex (\Cref{sec:strictlyconvex}).

\subsection{Strictly concave utilities} \label{sec:strictlyconcave}
In this section, all $g_i$'s are assumed to be strictly concave, i.e., $g_i(\lambda\delta+(1-\lambda)\delta')>\lambda g_i(\delta)+(1-\lambda)g_i(\delta')$ for all different $\delta,\delta' \in \Delta(A)$ and $\lambda \in (0,1)$.

\begin{theorem}[\citet{BGSS23a}, Proposition 9] \label{thm:Leontiefstrictly concave}
For binary symmetric separable and strictly concave utilities, the set of NEs coincides with that for Leontief utilities with the same valuations ($v_{i,x} \in \{0,1\}$).
\end{theorem}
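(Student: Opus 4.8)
The plan is to reduce the equilibrium condition for binary symmetric separable strictly concave utilities to exactly the combinatorial support condition that \Cref{lem:eq-iff-critical} provides for Leontief utilities. Concretely, I would show that a profile $(\delta_i)_{i\in N}$ is an NE under these utilities if and only if $\supp(\delta_i)\subseteq \argmin_{x\in A_i}\delta_x$ for every $i$. Once this is proved, the theorem is immediate: taking $v_{i,x}=1$ for $x\in A_i$ and $v_{i,x}=0$ otherwise gives $\Apos_i=A_i$ and $T_{\delta,i}=\argmin_{x\in A_i}\delta_x/v_{i,x}=\argmin_{x\in A_i}\delta_x$, so the two support conditions are literally the same set, and \Cref{lem:eq-iff-critical} identifies the Leontief equilibria with precisely the same profiles.

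First I would reduce each agent's best-response problem to her approved alternatives. Since $g_i$ is strictly increasing and $u_i$ ignores every $x\notin A_i$, shifting any budget from an unapproved alternative onto an approved one strictly raises $u_i$; hence at any best response $\supp(\delta_i)\subseteq A_i$, and the problem becomes maximizing the concave map $\delta_i\mapsto \sum_{x\in A_i} g_i\big((\delta_{-i})_x+\delta_{i,x}\big)$ over the simplex $\{\delta_i\ge 0,\ \sum_{x\in A_i}\delta_{i,x}=1/n\}$, where $\delta_{-i}:=\sum_{j\ne i}\delta_j$.

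For the necessity direction I would use a local exchange argument. If some $x\in\supp(\delta_i)$ had $\delta_x>c:=\min_{y\in A_i}\delta_y$, pick $z\in A_i$ attaining the minimum and move a small $\varepsilon>0$ of agent $i$'s budget from $x$ to $z$. Because the same function $g_i$ governs every approved coordinate and $g_i$ is strictly concave, the increment $g_i(t+\varepsilon)-g_i(t)$ is strictly decreasing in $t$; as $\delta_z<\delta_x$, the marginal gain at $z$ strictly exceeds the marginal loss at $x$ for small $\varepsilon$, contradicting \Cref{def:Nasheq}. This forces $\supp(\delta_i)\subseteq\argmin_{x\in A_i}\delta_x$. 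For sufficiency I would show the support condition makes $\delta_i$ a \emph{global} maximizer of the concave objective: since the objective is concave and the domain convex, it suffices to produce multipliers meeting the first-order optimality conditions. I would choose $\mu$ in the superdifferential $\partial g_i(c)$ and check that at every supported coordinate (where $\delta_x=c$) some supergradient equals $\mu$, while at every other approved coordinate some element of $\partial g_i(\delta_x)$ lies weakly below $\mu$; this uses that such coordinates receive nothing from $i$, so $\delta_x\ge c$, and that $\partial g_i$ is monotonically nonincreasing.

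I expect the main obstacle to be the sufficiency direction combined with the fact that a strictly concave $g_i$ need not be differentiable, so the naive gradient-based KKT argument is unavailable and I must argue with one-sided derivatives / superdifferentials and invoke \emph{global} optimality of concave maximization rather than a merely local stationarity check. Extra care is needed to confirm that coordinates outside $\supp(\delta_i)$ satisfy $\delta_x\ge c$, so that any deviation can only raise them and thereby keep them on the correct side of the supporting hyperplane.
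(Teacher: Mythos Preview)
The paper does not prove this theorem; it is quoted verbatim from \citet{balancedArXiv} and immediately followed by a corollary, with no argument given in between. So there is no ``paper's own proof'' to compare against.

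Your approach is correct and is the natural one: reduce both equilibrium notions to the same combinatorial support condition $\supp(\delta_i)\subseteq\argmin_{x\in A_i}\delta_x$ and invoke \Cref{lem:eq-iff-critical}. The necessity argument via a local exchange is exactly right, using that strict concavity of the \emph{same} function $g_i$ on every approved coordinate makes the secant slope strictly decreasing. For sufficiency, your superdifferential/KKT route works; note that the ``extra care'' you flag is actually a non-issue: the inequality $\delta_x\ge c$ for $x\in A_i\setminus\supp(\delta_i)$ is immediate from the definition $c=\min_{y\in A_i}\delta_y$. What you genuinely need there is that for any $x\in\supp(\delta_i)$ and any $y\in A_i$, the one-sided derivatives satisfy $g_i'^{+}(\delta_y)\le g_i'^{-}(\delta_x)=g_i'^{-}(c)$, which follows from $\delta_y\ge c$ and monotonicity of the superdifferential of a concave function; this directly gives global optimality on the simplex without ever needing differentiability. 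One could also bypass KKT entirely with a direct exchange argument: any alternative $\delta_i'$ differs from $\delta_i$ by moving mass off coordinates with value $c$ onto coordinates with value $\ge c$, and a telescoping use of the decreasing-increment property of $g_i$ shows this cannot increase the objective.
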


For binary valuations, \citet{BGSS23a} stated two different possibilities to compute an overall equilibrium distribution via linear programming. This distribution can be decomposed into an NE using the same linear program described at the end of \Cref{thm:Leontiefeqcomp}.

\begin{corollary}
For binary symmetric separable and strictly concave utilities, an NE can be computed via linear programming.
\end{corollary}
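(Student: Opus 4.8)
The plan is to assemble an NE out of two linear-programming steps, using the equivalence established in the preceding theorem. First I would invoke that theorem to identify the set of NEs of the binary symmetric separable strictly concave game with the set of NEs of the Leontief game having the same binary valuations $v_{i,x}\in\{0,1\}$. Because these two sets coincide exactly, it suffices to produce an LP-based procedure that outputs an NE of the binary Leontief instance; whatever we compute there is automatically an NE of the original game.

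Second, I would compute the unique overall equilibrium distribution $\delta^*$. The key point is that, for binary valuations, one does not need the full product-maximization convex program together with the ellipsoid machinery of \Cref{thm:Leontiefeqcomp}: the utility of agent $i$ collapses to $\min_{x\in A_i}\delta_x$, and this additional structure lets the equilibrium distribution be pinned down by an ordinary linear program. I would simply appeal to the two LP formulations given by \citet{balancedArXiv} for binary valuations, each of which returns $\delta^*$ (and hence the induced utility profile $u^*$).

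Third, having fixed $\delta^*$ and $u^*$, I would decompose $\delta^*$ into individual contributions that form an NE. This is precisely the linear feasibility program described at the end of \Cref{sec:Leontief}: take the LP from the proof of \Cref{lem:edr-rational}, substitute the now-known values for the variables $u_i$ and $d_x$, and leave only the decomposition variables $d_{i,x}$ free. By \Cref{lem:eq-iff-critical}, any feasible point of this program is an NE, since each agent contributes only to alternatives that are critical for her. As both the equilibrium-distribution step and the decomposition step are linear programs, the entire computation is carried out via linear programming, which is the claim.

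The main obstacle is the second step: justifying that the binary Leontief equilibrium distribution is genuinely LP-computable rather than requiring the convex-programming route of \Cref{thm:Leontiefeqcomp}. This is exactly where the binary structure is essential, and I would lean on the two constructions of \citet{balancedArXiv} rather than re-deriving them. The first and third steps are essentially bookkeeping: the first is a direct appeal to the theorem equating the two equilibrium sets, and the third reuses verbatim the decomposition LP already introduced for the general Leontief case, so no new argument is needed there.
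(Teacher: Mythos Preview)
Your proposal is correct and mirrors the paper's own argument essentially step for step: reduce to the binary Leontief game via the preceding theorem, compute the unique equilibrium distribution using the LP formulations from \citet{balancedArXiv}, and then decompose it into an NE via the linear feasibility program described at the end of \Cref{sec:Leontief}. There is no substantive difference between your route and the paper's.
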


Furthermore, the concept of critical alternatives can readily be transferred in order to check whether a given strategy profile constitutes an NE.

\subsection{Strictly convex utilities} \label{sec:strictlyconvex}

In this section, all $g_i$'s are assumed to be strictly convex, i.e., $\lambda g_i(\delta)+(1-\lambda)g_i(\delta')>g_i(\lambda\delta+(1-\lambda)\delta')$ for all different $\delta,\delta' \in \Delta(A)$ and $\lambda \in (0,1)$.

Note that in this case, each utility function $u_i(\delta)$, seen as an $ m$-dimensional real-valued function, is convex, as it is a sum of strictly convex functions. Furthermore, it is strictly convex if the domain is restricted to the $\lvert A_i \rvert$-dimensional subspace spanned by agent $i$'s approved alternatives. 

Let $M_{\delta,i} \coloneqq 
\arg \max_{x\in A_i}\delta_x$ be the set of alternatives with maximum share under the distribution $\delta$ that are approved by agent $i$.

\begin{lemma}
\label{lem:convexeq}
For binary symmetric separable and strictly convex utilities, the set of Nash equilibria consists of all $(\delta^*_i)_{i\in N}$ with $\supp(\delta^*_i) = M_{\delta^*,i}$ and $\lvert\supp(\delta^*_i)\rvert =\lvert M_{\delta^*,i}\rvert =1$ for all $i\in N$.     
\end{lemma}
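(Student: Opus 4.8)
The plan is to prove both directions of the characterization, exploiting the interplay between strict convexity of the $g_i$ and the fact that, once every agent concentrates her entire share on one alternative, each overall share $\delta^*_x$ becomes an integer multiple of $1/n$. Throughout I would use two elementary consequences of strict convexity: (i) a convex function attains its maximum over the simplex $S_i$ at a vertex, i.e.\ at a strategy putting all of $1/n$ on a single alternative; and (ii) the increment $\psi_i(t) := g_i(t+1/n)-g_i(t)$ is strictly increasing in $t$ (monotonicity of difference quotients). I would also record the positivity $\psi_i>0$, which follows from $g_i$ being strictly increasing.

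\emph{Forward direction} (NE $\Rightarrow$ structure). Fix an NE $(\delta^*_i)_{i\in N}$ and an agent $i$. First I would note $\supp(\delta^*_i)\subseteq A_i$: placing positive budget on an unapproved alternative and shifting it to any approved one strictly raises $u_i$, contradicting optimality. Next I claim $\lvert\supp(\delta^*_i)\rvert=1$: if $i$ spread positive amounts over two approved alternatives $x,y$, then transferring budget between them yields a utility that is a strictly convex function of the transfer on a nondegenerate interval, hence strictly smaller at the interior point $0$ than at one endpoint, a strictly profitable deviation. Call the unique element $x_i$. Since this holds for all agents, every $\delta^*_x$ is a multiple of $1/n$. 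Finally, comparing $i$'s strategy with the deviation moving her whole share to another approved $y$ gives, after cancellation, $\psi_i(\delta^*_{x_i}-1/n)\ge\psi_i(\delta^*_y)$, so by monotonicity $\delta^*_y\le\delta^*_{x_i}-1/n<\delta^*_{x_i}$. Thus $x_i$ is the strict $\delta^*$-maximizer in $A_i$, i.e.\ $M_{\delta^*,i}=\{x_i\}=\supp(\delta^*_i)$ with both sets singletons.

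\emph{Backward direction} (structure $\Rightarrow$ NE). Assume each agent concentrates on a single $x_i$ that is the unique $\delta^*$-maximizer in $A_i$. Because everyone concentrates, the shares are again multiples of $1/n$, so the strict inequality $\delta^*_{x_i}>\delta^*_y$ upgrades to the gap $\delta^*_{x_i}-1/n\ge\delta^*_y$ for every approved $y\ne x_i$. Now fix $i$ and the fixed profile $\delta^*_{-i}$. Agent $i$ maximizes a convex function over $S_i$, so a best response sits at a vertex; a vertex on an unapproved alternative is dominated since $\psi_i>0$, and among approved alternatives the best vertex maximizes $\psi_i(\delta^*_{-i,y})$, hence $\delta^*_{-i,y}$. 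Using $\delta^*_{-i,x_i}=\delta^*_{x_i}-1/n\ge\delta^*_y=\delta^*_{-i,y}$, the vertex $x_i$ is a maximizer, so $i$'s strategy is a best response and the profile is an NE.

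\emph{Main obstacle.} The delicate step is the backward direction. The bare structural condition $M_{\delta^*,i}=\{x_i\}$ only supplies the strict inequality $\delta^*_{x_i}>\delta^*_y$, whereas verifying the best-response condition requires the quantitative gap $\delta^*_{x_i}-\delta^*_y\ge 1/n$. Bridging this is precisely where the discreteness observation, that all shares are multiples of $1/n$ once agents concentrate, becomes essential; I would state and invoke it explicitly in both directions rather than leaving it implicit.
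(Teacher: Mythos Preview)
Your proposal is correct and follows essentially the same route as the paper: both arguments reduce the best-response problem to vertices via (strict) convexity, use the monotonicity of the increment $t\mapsto g_i(t+1/n)-g_i(t)$ to identify the optimal vertex with the $\arg\max$ of the residual distribution over $A_i$, and invoke the discreteness observation (all shares are multiples of $1/n$) to close the backward direction. Your write-up is somewhat more explicit than the paper's in justifying the restriction $\supp(\delta^*_i)\subseteq A_i$ and in isolating the role of discreteness, but the underlying ideas coincide.
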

\begin{proof}
The distribution $(\delta^*_i)_{i \in N}$ is an NE if and only if every agent $i$'s distribution $\delta^*_i$ is a best response and therefore a solution of the optimization problem 
\begin{align*}
\max_{\delta_i \in S_i} u_i(\delta^*-\delta^*_i+\delta_i)\text{.}
\end{align*}
The objective function is strictly convex on this domain, and the feasible region is the ($\lvert A_i\rvert -1$)-dimensional simplex spanned by those strategies where agent $i$ places her entire budget on a single approved alternative. So every maximizer is a vertex of this simplex, which proves that $\lvert\supp(\delta_i)\rvert=1$. 
By adding $1/n$ to an alternative $x\in A_i$ in $\delta^*-\delta^*_i$, agent $i$'s utility increases by $g_i(\delta^*_x-\delta^*_{i,x}+1/n)-g_i(\delta^*_x-\delta^*_{i,x})$. This increase is maximal if and only if $x$ is an alternative with maximal share under the distribution $\delta^*-\delta^*_i$, i.e., $x\in M_{\delta^*-\delta^*_i,i}$ since $g_i$ is strictly convex.
This proves that $(\delta^*_i)_{i \in N}$ is an NE if and only if every agent plays a pure strategy, i.e., $\lvert\supp(\delta^*_i)\rvert =1$, and $\supp(\delta^*_i) \subseteq M_{\delta^*-\delta^*_i,i}$ for all $i\in N$. Thus, $\lvert M_{\delta^*,i}\rvert=1$ and $\supp(\delta^*_i) = M_{\delta^*,i}$ for all $i\in N$. 

For the other direction, it remains to show that if $\supp(\delta_i) = M_{\delta,i}$ and $\lvert M_{\delta,i}\rvert =1$ for all $i\in N$, it holds that $\supp(\delta_i) \subseteq M_{\delta-\delta_i,i}$ for all $i\in N$. Therefore, assume $x\in \supp(\delta_i)$ and $y\in A \backslash \lbrace x\rbrace$ for agent $i$. Then, since all agents play pure strategies, all $\delta_x$ are some multiple of $1/n$ and $\delta_x\geq\delta_y+1/n$. Thus,  
\begin{align*}
\delta_y-\delta_{i,y}=\delta_y\leq\delta_x-1/n=\delta_x-\delta_{i,x}
\end{align*}
and $\supp(\delta_i) \subseteq M_{\delta-\delta_i,i}$.
\end{proof}

\begin{corollary}\label{cor:equalbudget}
For a Nash equilibrium $(\delta^*_i)_{i \in N}$ and two alternatives $x,y \in A$, $\delta^*_x=\delta^*_y$ implies $\lvert A_i \cap \{x,y\}\rvert=1$ for all $i \in N$ with $\supp(\delta^*_i) \in \{x,y\}$.
\end{corollary}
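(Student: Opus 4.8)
The plan is to derive this directly from the structural characterization of equilibria in \Cref{lem:convexeq}. That lemma tells us that at any NE in the strictly convex case, every agent plays a pure strategy, and moreover $\supp(\delta^*_i) = M_{\delta^*,i}$ with $\lvert M_{\delta^*,i}\rvert = 1$. In words, each agent $i$ places her whole share $1/n$ on a single approved alternative, and that alternative is the \emph{unique} maximizer of $\delta^*_z$ over $z \in A_i$. The corollary will follow by observing that equal shares for $x$ and $y$ cannot be reconciled with this uniqueness whenever an agent contributing to one of them approves both.

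First I would fix an agent $i$ with $\supp(\delta^*_i) \in \{x,y\}$ and, by the symmetry between $x$ and $y$, assume without loss of generality that $\supp(\delta^*_i) = \{x\}$. Since $M_{\delta^*,i} = \arg\max_{z\in A_i}\delta^*_z \subseteq A_i$, the identity $\supp(\delta^*_i) = M_{\delta^*,i}$ from \Cref{lem:convexeq} immediately gives $x \in A_i$, so that $\lvert A_i \cap \{x,y\}\rvert \geq 1$. It then remains only to rule out $y \in A_i$.

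Next I would argue by contradiction: suppose $y \in A_i$ as well, so that $y$ is a candidate in the maximization defining $M_{\delta^*,i}$. Because $x \in M_{\delta^*,i}$, we have $\delta^*_x = \max_{z\in A_i}\delta^*_z$, and the hypothesis $\delta^*_x = \delta^*_y$ then forces $\delta^*_y$ to attain this same maximum, so $y \in M_{\delta^*,i}$ too. Together with $x \in M_{\delta^*,i}$ and $x \neq y$, this yields $\lvert M_{\delta^*,i}\rvert \geq 2$, contradicting $\lvert M_{\delta^*,i}\rvert = 1$. Hence $y \notin A_i$, and combined with $x \in A_i$ we conclude $\lvert A_i \cap \{x,y\}\rvert = 1$, as claimed.

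I do not anticipate a genuine obstacle here, since the statement is an almost immediate consequence of the uniqueness of the share-maximizing alternative guaranteed by \Cref{lem:convexeq}. The only points that require a little care are to exploit the symmetry of the roles of $x$ and $y$ rather than running two near-identical cases, and to make sure the two facts read off the support characterization — that an agent contributes only to an approved alternative (giving $x \in A_i$), and that the share-maximizing approved alternative is unique — are both invoked correctly.
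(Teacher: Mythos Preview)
Your proof is correct and is exactly the intended argument: the paper states this corollary without proof, as an immediate consequence of \Cref{lem:convexeq}, and your derivation via the uniqueness of $M_{\delta^*,i}$ is precisely how one reads it off.
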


\Cref{lem:convexeq} implies that checking whether a given distribution $\delta$ admits an NE $(\delta_i)_{i\in N}$ can be done in polynomial time. In a first step, verify that $\lvert M_{\delta,i}\rvert=1$ for all $i\in N$. Second, check that the decomposition $(\delta_i)_{i\in N}$ with $\delta_{i,x}=1/n$ for $x\in M_{\delta,i}$ (and $\delta_{i,x}=0$ otherwise) is a valid decomposition of $\delta$.

\begin{theorem}
    For binary symmetric separable and strictly convex utilities, a Nash equilibrium always exists and can be found in polynomial time.
\end{theorem}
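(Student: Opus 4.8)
The plan is to use \Cref{lem:convexeq} to reduce the theorem to a finite, combinatorial problem and then solve it by best-response dynamics equipped with an explicit potential function. By \Cref{lem:convexeq}, every NE is a \emph{pure} profile in which each agent $i$ places her whole share $1/n$ on a single approved alternative, and that alternative is the unique approved one of maximal share. Moreover, each agent's best response maximizes a strictly convex function over the simplex $S_i$, so it is attained at a vertex and is therefore pure; hence it suffices to search among pure profiles. For a pure profile I would write $c_x$ for the number of agents assigned to $x$, so that $\delta_x = c_x/n$, and record that the NE condition of \Cref{lem:convexeq} becomes: for every agent $i$ assigned to $x$ and every other approved $y \in A_i$, one has $c_x \geq c_y + 1$.

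First I would run best-response dynamics on pure profiles, starting from an arbitrary assignment (possible since every $A_i \neq \emptyset$) and repeatedly letting an agent with an improving deviation move to her best approved alternative. The key computation is the effect of one move: if agent $i$ moves from $x$ to an approved $z \neq x$, only $c_x$ and $c_z$ change, and her utility changes by
\[
\bigl[g_i\!\bigl(\tfrac{c_z+1}{n}\bigr)-g_i\!\bigl(\tfrac{c_z}{n}\bigr)\bigr]-\bigl[g_i\!\bigl(\tfrac{c_x}{n}\bigr)-g_i\!\bigl(\tfrac{c_x-1}{n}\bigr)\bigr].
\]
Since $g_i$ is strictly convex, its consecutive increments are strictly increasing, so this quantity is positive if and only if $c_z \geq c_x$; that is, an improving move is exactly a move onto a pile at least as large as the one being left.

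Next I would introduce the potential $\Phi \coloneqq \sum_{x \in A}\binom{c_x}{2}$, the number of co-located pairs of agents. A move of agent $i$ from $x$ to $z$ changes it by $\Delta\Phi = c_z-(c_x-1)=c_z-c_x+1$, so every improving move (where $c_z\geq c_x$) strictly increases $\Phi$ by at least $1$. As $\Phi$ is integer-valued and bounded above by $\binom{n}{2}$, the dynamics terminates after at most $\binom{n}{2}$ moves at a profile admitting no improving deviation, which is precisely an NE; this simultaneously establishes existence. Each step (recomputing the counts and checking, for each agent, whether some approved alternative has weakly larger count) runs in polynomial time, so the whole procedure is polynomial.

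The main obstacle, and really the crux, is establishing that $\Phi$ is a (generalized ordinal) potential, i.e.\ the chain of equivalences ``$i$ has an improving deviation from $x$ to $z$'' $\iff$ ``$c_z\geq c_x$'' $\iff$ ``$\Phi$ strictly increases''. This rests entirely on strict convexity of the $g_i$ through the monotonicity of their increments, and it is exactly what renders the player-specific nature of the utilities harmless: although the \emph{magnitude} of an agent's gain depends on her own $g_i$, its \emph{sign} does not, so all agents agree that larger piles are preferable. The remaining care is in justifying the restriction to pure profiles (supplied by \Cref{lem:convexeq} together with vertex-optimality of maximizing a strictly convex function) and in bounding the per-step runtime.
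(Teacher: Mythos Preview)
Your proof is correct and takes a genuinely different route from the paper. The paper constructs an equilibrium \emph{greedily}: it repeatedly selects the alternative approved by the largest number of not-yet-assigned agents and assigns all of them to it, then argues via \Cref{lem:convexeq} and \Cref{cor:equalbudget} that the resulting profile is an NE. You instead run \emph{best-response dynamics} on pure profiles and exhibit the explicit ordinal potential $\Phi=\sum_{x}\binom{c_x}{2}$, showing that every improving move (equivalently, a move onto a pile of weakly larger height) raises $\Phi$ by at least one.

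Both arguments rest on the same combinatorial observation coming from strict convexity, namely that an agent prefers to sit on a taller pile and the sign of the gain is independent of $g_i$. What each buys: the paper's greedy construction is slightly faster (at most $n$ rounds versus your $\binom{n}{2}$ improvement steps) and more direct, but it needs the auxiliary \Cref{cor:equalbudget} to rule out ties in $M_{\delta^*,i}$. Your potential argument is more self-contained (no tie-breaking corollary is needed, since termination immediately forces $c_z<c_x$ for all other approved $z$), and it yields the stronger dynamic statement that best-response play converges in $O(n^2)$ steps from any initial pure-on-approved assignment. One small point worth making explicit in your write-up: the ``arbitrary assignment'' you start from must already place each agent on an \emph{approved} alternative; otherwise a first move from an unapproved alternative to a small approved pile is improving yet can decrease $\Phi$.
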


\begin{proof}
We use the following greedy approach to construct an NE $(\delta_i^*)_{i \in N}$. 
\footnote{
This algorithm is equivalent to an algorithm called \emph{``maximum payment rule''} \citep{ALL+25b}, which was recently presented as a rule for approximately-fair budget distribution for agents with binary linear utilities.
}
Let $N'\subseteq N$ denote the set of agents who have not yet allocated their budgets.
In each step, select an alternative $x^* \in \arg \max_{x \in A} \sum_{i \in N': x \in A_i} 1/n$, i.e., $x^*$ maximizes the aggregated share of the remaining agents who approve it. Each of these agents then allocates her entire budget to $x^*$, and is excluded from $N'$. The procedure is repeated until no agents remain.

This process runs in polynomial time in $n$ and $m$ as it terminates after at most $n$ steps with a complete strategy profile $(\delta^*_i)$ where in each step, a suitable alternative $x^*$ can be found by counting approvals of the remaining agents in $N'$. By construction, $\lvert\supp{\delta^*_i}\rvert=1$ for all $i \in N$. Furthermore, $\supp(\delta^*_i) \subseteq M_{\delta^*,i}$. Otherwise, some agent $i$ approves an alternative $x$ with $\delta^*_x>\delta^*_{\supp(\delta^*_i)}$ and would have contributed to $x$ instead under the described greedy approach.

Finally, if $\lvert M_{\delta^*,i}\rvert>1$ for some agent $i$, then there exist alternatives $x,y \in A_i$ with $\delta^*_x = \delta^*_y$ and w.l.o.g. $\supp(\delta^*_i)=x$, contradicting \Cref{cor:equalbudget}. 

Thus, $(\delta^*_i)_{i \in N}$ constitutes an NE by \Cref{lem:convexeq}.    
\end{proof}

The following example shows that the set of NEs is not convex in general.

\begin{example}
Consider a profile with a single agent $i$ and two alternatives $x$ and $y$ that are both approved by the agent, i.e., $A_i=A$. Using \Cref{lem:convexeq}, it is easy to see that there exist exactly two equilibria, which are given by the two distributions that allocate the entire budget to one of the alternatives.  
\end{example}

In fact, the number of NE is finite and can be bounded explicitly.

\begin{lemma}
The size of the set of Nash equilibria is bounded by $\min\{m!,\prod_{i\in N}\lvert A_i\rvert\}$     
\end{lemma}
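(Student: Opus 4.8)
The plan is to establish the two bounds in the minimum separately, leaning on the structural characterization from \Cref{lem:convexeq}: in every NE, each agent $i$ plays a pure strategy, placing her entire share $1/n$ on a single approved alternative $c(i)\in A_i$, and this alternative is the \emph{unique} share-maximizer among her approved alternatives (that is, $\{c(i)\}=M_{\delta^*,i}$). Consequently, an NE is completely determined by the assignment $c\colon N\to A$, so it suffices to bound the number of assignments $c$ that can arise from an NE.

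For the bound $\prod_{i\in N}\lvert A_i\rvert$, I would observe that since $c(i)\in A_i$ for every agent, the assignment $c$ is simply a function selecting, for each agent $i$, one of the $\lvert A_i\rvert$ alternatives she approves. The number of such functions is exactly $\prod_{i\in N}\lvert A_i\rvert$, and since each NE yields one such $c$ and is recovered from it, the number of NEs is at most $\prod_{i\in N}\lvert A_i\rvert$.

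For the bound $m!$, the idea is to exhibit a surjection from the set of permutations of $A$ onto the set of NE assignments. Given a permutation (strict total order) $\pi$ of the $m$ alternatives, define the candidate assignment $c_\pi(i)$ to be the first alternative in the $\pi$-order that lies in $A_i$. I would then show that every NE assignment $c$ equals $c_\pi$ for some $\pi$: starting from the NE distribution $\delta^*$, order the alternatives by strictly decreasing share, breaking any ties by index, to obtain a permutation $\pi$. Because $\{c(i)\}=M_{\delta^*,i}$ means $c(i)$ has \emph{strictly} larger share than every other alternative in $A_i$, it precedes all of her other approved alternatives in $\pi$ regardless of how ties are broken; hence $c_\pi(i)=c(i)$ for all $i$. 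Thus every NE lies in the image of $\pi\mapsto c_\pi$, giving at most $m!$ distinct NEs. Combining the two bounds yields the claimed $\min\{m!,\prod_{i\in N}\lvert A_i\rvert\}$.

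The main subtlety I would take care to address explicitly is the treatment of alternatives with equal shares in the $m!$ argument: a priori, ties could make the decreasing-share ordering ambiguous and so break the correspondence $\pi\mapsto c_\pi$. The observation that dissolves this difficulty is the strictness in $\lvert M_{\delta^*,i}\rvert=1$ (guaranteed by \Cref{lem:convexeq} and reinforced by \Cref{cor:equalbudget}): each agent's chosen alternative strictly dominates her other approved alternatives in share, so it is unaffected by any tie-breaking rule and is always identified as the first approved alternative in the order. This is the crux that makes the map well defined on all NEs and is where I expect the real work to lie.
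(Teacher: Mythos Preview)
Your proposal is correct and follows essentially the same approach as the paper: both bounds are obtained by invoking \Cref{lem:convexeq} to reduce an NE to a pure assignment $c\colon N\to A$, giving the product bound directly, and then associating each NE to a strict ordering of the alternatives by share (with arbitrary tie-breaking justified via the strictness $\lvert M_{\delta^*,i}\rvert=1$, equivalently \Cref{cor:equalbudget}) to obtain the $m!$ bound. Your framing of the $m!$ step as a map $\pi\mapsto c_\pi$ whose image contains all NE assignments is just the contrapositive of the paper's ``each ordering determines at most one NE,'' and your handling of ties is exactly the point the paper isolates as well.
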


\begin{proof}
First, there are $m!$ possibilities to strictly order the set of alternatives by the amount of budget they receive. For each ordering, there can exist at most one NE for which the total distribution follows this ordering from highest to lowest share placed on the alternatives. Ties can be broken arbitrarily due to \Cref{cor:equalbudget}. By \Cref{lem:convexeq}, the ordering uniquely determines the sets $M_{\delta,i}$ and thus, the possible corresponding NE.

Second, \Cref{lem:convexeq} shows that in any NE, every agent always plays a pure strategy. This implies that each agent $i$ has at most $\lvert A_i\rvert$ different best responses. Therefore, there are $\prod_{i\in N}\lvert A_i\rvert$ possible combinations of best responses, i.e., potential NEs.
\end{proof}

Both of these upper bounds are sharp for certain profiles.

\begin{example}
First, consider a profile with $m$ alternatives and $n:=2^m-1$ agents, one agent for every possible set of approved alternatives. 
In particular, each alternative is approved by half of the agents. Then, for every possible strict ordering of alternatives $x_1 \succ \dots \succ x_n$, there exists an equilibrium with shares following this ordering: alternative $x_\ell$ receives the budgets of all agents $i$ with $x_\ell \in A_i$ and $x_k \not\in A_i$ for all $k <\ell$, i.e., $2^{m-1}/2^{\ell-1}=2^{m-\ell}$ agents.  
Thus, the allocated shares in decreasing order are given by $2^{m-1}/n,2^{m-2}/n,...,1/n$.

Second, for every set of positive integers $a_1,\dots,a_n$, 
consider a profile with $n$ agents 
and $m:=a_1+\cdots+a_n$ alternatives, 
such that each agent $i$ approves a set $A_i$ of $a_i$ alternatives, and all $A_i$'s are pairwise disjoint.
Then, every possible combination of strategies $\delta_i$ with $\lvert \supp(\delta_i)\rvert=1$ and $\supp(\delta_i) \in A_i$ results in an equilibrium, giving $\prod_{i=1}^n \lvert A_i \rvert$ equilibria in total.
\end{example}

\section{$\ell_1$ disutilities}
\label{sec:l1_preferences}

A widely studied utility model in the context of budget-aggregation is based on $\ell_1$ preferences, also referred to as $\ell_1$ disutilities \citep{Lind11a, GKSA19a, FPP+19a}.
The existing literature mostly focuses on the axiomatic investigation of mechanisms \citep{BGSS24a,EGL+26a}, in particular, moving phantom mechanisms \citep{FPPV21a,BFS+24a}. 

For $\ell_1$ preferences, each agent is assumed to have a peak allocation — a preferred distribution of the budget — and her utility is determined by the negative $\ell_1$ distance between any given allocation and her peak.
Formally, let $p_i=(p_{i,x})_{x \in A} \in \Delta(A)$ denote the peak of agent $i$, then her utility for a given allocation $\delta \in \Delta(A)$ is defined as  
\begin{equation*}
u_i(\delta) = - \lVert \delta - p_i \rVert_1.
\end{equation*}

Alternatively, the utility can be interpreted in terms of overlap utilities, as shown by \citet{GKSA19a}. 
In the overlap utility model, the utility of agent $i$ is defined as the sum of the overlaps between the allocation 
$\delta$ and the agent's peak $p_i$:
\begin{equation*}
u_i(\delta) = \sum_{x \in A} \min(\delta_x , p_{i,x}).
\end{equation*}
\citet{GKSA19a} demonstrate that $\ell_1$ disutilities are equivalent (up to an affine transformation) to overlap utilities, making both models interchangeable in our budget-aggregation context. Note that these preferences are separable, but not binary symmetric separable.

These utilities capture the behavior of agents who derive value from receiving resources up to a certain satiation point, beyond which additional resources provide no further benefit. 
Specifically, the utility is linear in the allocation up to the agent’s peak, and constant thereafter — reflecting satiation once the desired threshold is met.

We say an alternative $x$ is \textit{oversupplied} for agent $i$ if $\delta_x > p_{i,x}$.
Additionally, let $M_i^{>}(\delta)$ be the set of alternatives that are oversupplied for agent $i$ given an allocation $\delta$.
Similarly, an alternative $x$ is \textit{undersupplied} for agent $i$ if $\delta_x < p_{i,x}$ and $M_i^{<}(\delta)$ is the set of undersupplied alternatives. Define $M_i^{\leq}(\delta^*)$ accordingly.
We show that a strategy profile constitutes an NE when no agent contributes to one of her oversupplied alternatives. 

\begin{lemma} \label{lem:l1_nash_condition}
For $\ell_1$ preferences, a strategy profile $(\delta^*_i)_{i \in N}$ is a Nash equilibrium if and only if  $\supp(\delta^*_i) \subseteq M_i^{\leq}(\delta^*)$ for all $i \in N$.
\end{lemma}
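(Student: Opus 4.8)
The plan is to work with the overlap-utility representation $u_i(\delta) = \sum_{x \in A}\min(\delta_x, p_{i,x})$, which by \citet{GKSA19a} agrees with the $\ell_1$ disutility up to a positive affine transformation and hence induces exactly the same best responses and Nash equilibria. The whole argument then rests on one elementary observation: each coordinate map $t \mapsto \min(t, p_{i,x})$ is nondecreasing and $1$-Lipschitz, so a unit of budget placed on an undersupplied alternative (where $\delta_x < p_{i,x}$) raises $u_i$ by exactly one per unit, whereas a unit placed on an oversupplied alternative (where $\delta_x \geq p_{i,x}$) raises it by zero. I will also use the conservation identity $\sum_{x \in A}\delta_x = 1 = \sum_{x \in A}p_{i,x}$, which holds because every $\delta_j$ carries mass $1/n$ and $p_i \in \Delta(A)$. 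Here $M_i^{\leq}(\delta) = \{x : \delta_x \leq p_{i,x}\}$ is the complement of the oversupplied set $M_i^{>}(\delta)$.

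For the ``only if'' direction I argue by contraposition. Suppose agent $i$ places positive mass $\delta_{i,x} > 0$ on an oversupplied alternative $x \in M_i^{>}(\delta)$, so $\delta_x > p_{i,x}$. By the conservation identity, strict oversupply at $x$ forces strict undersupply at some alternative $y$, i.e.\ $\delta_y < p_{i,y}$. Transferring a sufficiently small $\varepsilon > 0$ of agent $i$'s budget from $x$ to $y$ (with $\varepsilon \le \min\{\delta_{i,x},\, \delta_x - p_{i,x},\, p_{i,y} - \delta_y\}$) leaves the contribution of $x$ unchanged at $p_{i,x}$ and increases the contribution of $y$ by exactly $\varepsilon$. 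This is a strictly profitable deviation, so the profile cannot be a Nash equilibrium.

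For the ``if'' direction I first bound agent $i$'s best achievable utility. Writing $c_x := \delta_x - \delta_{i,x}$ for the others' contribution, the $1$-Lipschitz property gives, for any admissible deviation $\delta_i'$, that $u_i(\delta_{-i} + \delta_i') \le \sum_{x}\min(c_x, p_{i,x}) + \sum_x \delta_{i,x}' = \sum_{x}\min(c_x, p_{i,x}) + 1/n$. Now assume $\supp(\delta_i) \subseteq M_i^{\leq}(\delta)$, so every alternative $x$ receiving mass from $i$ satisfies $\delta_x \le p_{i,x}$, hence also $c_x \le \delta_x \le p_{i,x}$. On such $x$ we get $\min(\delta_x, p_{i,x}) = \delta_x$ and $\min(c_x, p_{i,x}) = c_x$, while off the support $\delta_x = c_x$; summing the per-alternative increments yields exactly $u_i(\delta) = \sum_x \min(c_x, p_{i,x}) + \sum_{x \in \supp(\delta_i)} \delta_{i,x} = \sum_x \min(c_x, p_{i,x}) + 1/n$. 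Thus agent $i$ attains the upper bound and is playing a best response; since this holds for every $i$, the profile is a Nash equilibrium.

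The only step requiring genuine care, rather than routine bookkeeping, is the conservation argument in the ``only if'' direction: it is precisely the equality of total supply and total peak mass that guarantees an undersupplied alternative exists whenever an oversupplied one does, and this is what makes the budget transfer strictly improving. Everything else reduces to the monotone, $1$-Lipschitz behaviour of the functions $\min(\cdot, p_{i,x})$, together with the one-line check that the $\ell_1$ and overlap formulations share the same equilibria.
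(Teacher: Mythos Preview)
Your proof is correct and follows essentially the same approach as the paper: both directions hinge on the same idea (moving mass from an oversupplied to an undersupplied alternative is strictly improving, while shuffling among non-oversupplied alternatives is utility-neutral). Your ``if'' direction, phrased as an explicit upper bound $\sum_x \min(c_x,p_{i,x}) + 1/n$ that the current profile attains, is somewhat more rigorous than the paper's brief marginal-utility sentence, but the underlying argument is the same.
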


\begin{proof}
$\Rightarrow$: Assume that there is some agent $i$ with $\delta^*_{i, x} > 0$ for some alternative $x$ which is oversupplied, i.e. $\delta^*_x > p_{i, x}$.
Redistributing $\delta^*_{i, x}$ to some undersupplied alternative, say $x'$, strictly increases the utility of agent $i$, showing that the strategy profile $(\delta^*_i)_{i \in N}$ is not an NE. 

$\Leftarrow$: Suppose $(\delta^*_i)_{i \in N}$ satisfies $\supp(\delta^*_i) \subseteq M_i^{\leq}(\delta^*)$ for all $i \in N$.
Since the marginal utility decrease in $x$ equals the marginal utility increase in $x'$ for all $x, x' \in M_i^{\leq}(\delta^*)$, agent $i$ cannot strictly increase her utility by redistributing her budget $\delta_i$.
Therefore, $u_i(\delta^*)=\max_{\delta_i' \in S_i} u_i(\delta^*-\delta^*_i+\delta_i')$, establishing that $(\delta^*_i)_{i \in N}$ is an NE.
\end{proof}

The existence of an NE for $\ell_1$ disutilities follows from the more general existence result by \citet{Debr52a} for quasi-concave utilities. 
Given our above characterization, it is not surprising that NEs do not yield a unique overall distribution in the case of $\ell_1$ preferences, since the characterization depends solely on the support of the agents' individual distributions.

\begin{example}\label{example:l1_nash_notUnique}
Let $n=3$ and consider the following profile 
\begin{equation*}
P =
\begin{pmatrix}
    3/10 & 3/10 & 0 & 4/10 \\
    2/10 & 2/10 & 6/10 & 0 \\
    2/10 & 2/10 & 6/10 & 0 \\
\end{pmatrix},
\end{equation*}
where each row denotes the peak of an agent.
Since this instance has three agents, everyone has a budget of $1/3$.
Both $\delta = (3/10, 2/10, 14/30, 1/30)$ and $\delta' = (2/10, 3/10, 14/30, 1/30)$ admit decompositions into strategies that satisfy \Cref{lem:l1_nash_condition}, i.e. no agent contributes to oversupplied alternatives:
\begin{align*}
\left(\delta_i \right)_{i \in N} = 
\begin{pmatrix}
    3/10 & 0 & 0 & 1/30 \\
    0 & 2/10 & 4/30 & 0 \\
    0 & 0 & 10/30 & 0 \\
\end{pmatrix}
\\
\left(\delta'_i \right)_{i \in N} = 
\begin{pmatrix}
    0 & 3/10 & 0 & 1/30 \\
    2/10 & 0 & 4/30 & 0 \\
    0 & 0 & 10/30 & 0 \\
\end{pmatrix}
\end{align*}
This verifies that both agent profiles $(\delta_i)_{i \in N}$ and $(\delta_i')_{i \in N}$ are NEs for the given set of peaks $P$.
\end{example}

With the above example, we can show the following.

\begin{proposition}
For $\ell_1$ preferences, the set of equilibrium distributions is not convex.
\end{proposition}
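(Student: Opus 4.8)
The statement to prove is that for $\ell_1$ preferences, the set of equilibrium distributions is not convex. The natural approach is to produce a counterexample: exhibit two equilibrium distributions whose midpoint (or some convex combination) is not an equilibrium distribution. Conveniently, Example~\ref{example:l1_nash_notUnique} already hands us two distinct equilibrium distributions, $\delta = (3/10, 2/10, 14/30, 1/30)$ and $\delta' = (2/10, 3/10, 14/30, 1/30)$, arising from the same profile $P$. So the plan is to take the same three-agent instance and show that the convex combination $\bar\delta := \tfrac{1}{2}\delta + \tfrac{1}{2}\delta'$ fails to be an equilibrium distribution.

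**Carrying it out.** First I would compute the midpoint explicitly: $\bar\delta = (1/4,\,1/4,\,14/30,\,1/30)$, which is indeed a valid distribution since its entries are nonnegative and sum to $1$. The key step is then to argue that $\bar\delta$ cannot be decomposed into individual strategies satisfying the equilibrium condition of Lemma~\ref{lem:l1_nash_condition}, namely $\supp(\delta_i) \subseteq M_i^{\leq}(\bar\delta)$ for all $i$. To do this I would examine which alternatives are oversupplied under $\bar\delta$. Compare each coordinate of $\bar\delta$ against the peaks: for agents $2$ and $3$ (who share peak $(2/10, 2/10, 6/10, 0)$), alternative $1$ has $\bar\delta_1 = 1/4 > 2/10 = p_{2,1}$ and likewise alternative $2$ has $\bar\delta_2 = 1/4 > 2/10$, so both alternatives $1$ and $2$ are oversupplied for these two agents. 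Hence agents $2$ and $3$ may contribute only to alternatives $3$ and $4$ (and only where not oversupplied). The crux is a counting/accounting argument: I would total the demand on alternatives $1$ and $2$ that must come from agent $1$ alone, and show this exceeds agent $1$'s available budget of $1/3$. Specifically, since only agent $1$ can legally fund alternatives $1$ and $2$, we need $\bar\delta_1 + \bar\delta_2 = 1/4 + 1/4 = 1/2$ supplied by agent $1$, but agent $1$ has only $1/3 < 1/2$ to give. This contradiction shows no valid decomposition exists, so $\bar\delta$ is not an equilibrium distribution.

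**The main obstacle.** The bookkeeping step — verifying precisely which alternatives are oversupplied for which agents and then showing the budget infeasibility — is where care is needed, since one must check all three agents and all four alternatives against the correct peaks, and confirm that agents $2$ and $3$ genuinely cannot contribute to alternatives $1$ or $2$ under $\bar\delta$. The argument hinges on the fact that raising alternatives $1$ and $2$ to the common level $1/4$ pushes both above the peaks of two of the three agents simultaneously, concentrating all the required funding of those two alternatives onto the single agent whose peaks are high enough, and that agent simply lacks the budget. Once this infeasibility is established, Lemma~\ref{lem:l1_nash_condition} immediately certifies that $\bar\delta$ admits no equilibrium decomposition, completing the proof that the set of equilibrium distributions is not convex.
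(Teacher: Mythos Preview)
Your proposal is correct and follows essentially the same argument as the paper: both take the midpoint $\bar\delta = (1/4,1/4,14/30,1/30)$ of the two equilibrium distributions from Example~\ref{example:l1_nash_notUnique}, observe that alternatives $1$ and $2$ are oversupplied for agents $2$ and $3$ (since $1/4 > 2/10$), and conclude that agent $1$ alone would have to supply $1/2$ while only holding a budget of $1/3$. The paper's write-up is slightly terser but the logic is identical.
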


\begin{proof}
For the proof, consider again \Cref{example:l1_nash_notUnique} and a convex combination of the two overall distributions $\delta$ and $\delta'$.
More specifically, let $\delta'' := \frac{1}{2}\delta + \frac{1}{2}\delta'$.
We get $\delta'' = (1/4, 1/4, 14/30, 1/30)$.
Since Agents $2$ and $3$ only support the first and second alternative up to $2/10$, every strategy profile that is in equilibrium must have $\delta_{i,x} = 0$ for $i \in \{2, 3\}$ and $x \in \{1, 2\}$.
Since the total budget of the first agent is only $1/3$, she cannot cover both alternatives simultaneously.
Consequently, $\delta''$ cannot be decomposed into a strategy profile that is also an NE.
This establishes that the set of equilibrium distributions is not convex.
\end{proof}

Although the set of NEs may be non-convex --- demonstrating potential structural complexity --- there nonetheless exists a polynomial-time algorithm that computes an NE.

\begin{theorem}
For $\ell_1$ preferences, a Nash equilibrium can be computed in polynomial time.   
\end{theorem}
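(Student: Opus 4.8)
The plan is to use the characterization in \Cref{lem:l1_nash_condition} to reduce the search for an NE to the search for a single aggregate distribution that admits a support-restricted decomposition. Concretely, a profile $(\delta_i)_{i\in N}$ is an NE exactly when, writing $\delta:=\sum_i\delta_i$, we have $\supp(\delta_i)\subseteq M_i^{\leq}(\delta)$ for all $i$, i.e.\ every agent funds only alternatives $x$ with $\delta_x\leq p_{i,x}$. The first observation is that, for a \emph{fixed} target aggregate $\delta\in\Delta(A)$, deciding whether such a decomposition exists is a bipartite transportation feasibility problem: feed each agent $i$ a supply of $1/n$, give each alternative $x$ a demand of $\delta_x$, and permit the edge $(i,x)$ precisely when $\delta_x\leq p_{i,x}$. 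A feasible flow (found by a single max-flow computation) yields exactly the contributions $\delta_{i,x}$, and by \Cref{lem:l1_nash_condition} \emph{any} $\delta$ with $\sum_x\delta_x=1$ for which this flow saturates all demands is an equilibrium distribution. Thus the whole task reduces to \emph{finding one} feasible $\delta$.

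The difficulty — and the reason a single linear or convex program does not suffice — is that the permitted edge set depends on $\delta$ itself: raising $\delta_x$ can only \emph{remove} willing funders (agents with $p_{i,x}<\delta_x$), so the feasibility region in $\delta$ is non-convex, matching the non-convexity established just above. I would resolve this coupling by a water-filling / repair procedure driven by max-flow. Starting from a candidate aggregate and running the max-flow, whenever the flow fails to saturate the supplies the min-cut exposes a set $T$ of over-demanded alternatives, i.e.\ $\sum_{x\in T}\delta_x > \tfrac1n\lvert\{i: \exists x\in T,\ \delta_x\leq p_{i,x}\}\rvert$. I then lower the levels of the alternatives in $T$ to the next smaller breakpoint — which strictly \emph{enlarges} their set of willing funders — and push the freed budget onto alternatives outside $T$ before recomputing the flow. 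Equivalently, one can picture continuously raising the levels from $0$ while routing each agent's budget into room below her peak, and withdrawing and rerouting agent $i$'s contribution to $x$ at the exact instant $\delta_x$ reaches $p_{i,x}$; by construction this keeps $\delta_x\leq p_{i,x}$ for every agent funding $x$, which is precisely the NE condition.

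The main obstacle is proving that the procedure both terminates in polynomially many steps and never stalls. For termination I would observe that every level moves only through the at most $nm$ distinct peak values $p_{i,x}$, bounding the number of breakpoints, and I would exhibit a strictly decreasing potential across iterations (for instance the lexicographically ordered level vector on the over-demanded sets, or the total unrouted budget); since each iteration costs one max-flow, this yields an overall polynomial running time. For correctness I would use the monotonicity that lowering an over-demanded alternative never re-creates a deficiency there, so deficiencies are repaired permanently, and I would invoke the guaranteed existence of an NE (from quasi-concavity of the $u_i$ via \citet{Debr52a}) to certify that routing never stalls before all budget is placed. At termination the saturating flow is a valid support-restricted decomposition, and \Cref{lem:l1_nash_condition} certifies that the resulting profile is an NE.
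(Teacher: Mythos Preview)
Your reduction to a transportation problem for a \emph{fixed} aggregate $\delta$ is correct (and indeed the paper uses exactly this LP in the subsequent theorem to test decomposability). But the repair loop you build on top of it has real gaps. When the min-cut exposes an over-demanded set $T$ and you lower the levels in $T$, you must push the freed mass onto alternatives outside $T$; you do not say where, and any choice can raise those levels past some agents' peaks, deleting edges and creating \emph{new} over-demanded sets disjoint from $T$. So ``deficiencies are repaired permanently'' is not justified. Likewise, you promise a strictly decreasing potential but never produce one; the lexicographic level vector on the current over-demanded set is not obviously monotone once mass is rerouted, and ``total unrouted budget'' can stay constant across a repair step that merely shifts which alternatives are infeasible. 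Finally, invoking Debreu's existence theorem does not by itself certify that \emph{your} dynamics reach an equilibrium rather than cycling among non-equilibrium candidates.

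More to the point, all of this machinery is unnecessary. The paper gives a one-pass greedy algorithm that never needs to solve a flow problem or search over $\delta$: iterate over the alternatives in any order, and for each alternative $x$ process the agents in \emph{decreasing} order of $p_{i,x}$, letting each contribute $\min(b_i,\,p_{i,x}-\delta_x)$ from her remaining budget $b_i$. The ordering guarantees that once agent $i$ stops funding $x$, no later agent can push $\delta_x$ above $p_{i,x}$, so no contribution ever becomes oversupplied. A short pigeonhole argument (peaks sum to $1$, the partial allocation sums to less than $1$) shows every agent exhausts her budget, and \Cref{lem:l1_nash_condition} then certifies the profile is an NE. The whole thing runs in $O(mn\log n)$. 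Your flow-based route might be salvageable, but as written it is both incomplete and strictly more complicated than what is needed.
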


\begin{proof}
We present a polynomial-time algorithm.
Let each agent $i$ have a budget, denoted by $b_i$.
Initially, this budget is set to $b_i = 1/n$ for all $i \in N$.
Iterate over all alternatives $x \in A$ in an arbitrary order.
For a given alternative $x$, order all agents in a descending order of their peaks for this alternative, i.e., $p_{i_1, x} \geq p_{i_2, x} \geq \dots \geq p_{i_n,x}$.
When it is agent $i$'s turn, she allocates $\min(b_i, \max(0, p_{i,x}-\delta_x))$ on alternative $x$.
Afterwards, $b_i$ is decreased and $\delta_x$ is increased by this amount.
Once agent $i$ has spent her entire share ($b_i=0$), she is deleted from the set of considered agents.
The ordering of the agents ensures that $x$ does not become oversupplied to an agent who already contributed to $x$ later on, i.e., no agent contributes to an oversupplied alternative.
When no agent wants to contribute to alternative $x$, we go to the next alternative $x'$ and reorder the remaining agents according to their peaks ($p_{i_1, x'} \geq p_{i_2, x'} \dots \geq p_{i_n,x'}$).
Then, we let them again sequentially contribute to this alternative.

We prove that every agent spends her entire budget after iterating over all alternatives.
Assume for contradiction that there is some agent who has still some budget left after she had the possibility to contribute to every alternative.
This would mean that $p_{i, x} \leq \delta_x$ for all $x \in A$ in the final partial distribution $\delta$, which is impossible due to the fact that $p_i \in \Delta(A)$ but $\sum_{x \in A}\delta_x<1$.
All in all, the suggested construction returns a strategy profile that constitutes an NE. 

Iterating over all $m$ alternatives and, for every alternative, sorting at most $n$ agents according to their peaks leads to a run-time complexity of $O(m n \log n)$.
\end{proof}

This construction provides an efficient algorithm for obtaining an NE given a set of optimal distributions of the agents.
Additionally, we can also show whether a given overall distribution can be represented as individual distributions that constitute an NE.

\begin{theorem}
For $\ell_1$ preferences, one can check in polynomial time whether an overall distribution can be decomposed into a Nash equilibrium.
\end{theorem}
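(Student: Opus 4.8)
The plan is to reduce this decomposition question to a feasibility check for a bipartite transportation problem, solvable by a single max-flow computation. By \Cref{lem:l1_nash_condition}, a decomposition $(\delta_i)_{i\in N}$ of a given overall distribution $\delta$ is an NE if and only if each agent contributes nothing to her oversupplied alternatives, i.e.\ $\delta_{i,x}=0$ whenever $x\notin M_i^{\leq}(\delta)$. Hence the task is precisely to decide whether $\delta$ admits nonnegative individual contributions $(\delta_{i,x})$ satisfying the budget constraints $\sum_{x\in A}\delta_{i,x}=1/n$ for all $i$, the marginal constraints $\sum_{i\in N}\delta_{i,x}=\delta_x$ for all $x$, and the support constraint $\supp(\delta_i)\subseteq M_i^{\leq}(\delta)$ for all $i$.

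First I would compute, for each agent $i$, the set $M_i^{\leq}(\delta)=\{x\in A:\delta_x\le p_{i,x}\}$, which takes $O(nm)$ time since $\delta$ and the peaks are given. Next I would build a flow network with a source $s$, one node per agent, one node per alternative, and a sink $t$: an arc $s\to i$ of capacity $1/n$, an arc $x\to t$ of capacity $\delta_x$, and an arc $i\to x$ of unbounded capacity (equivalently capacity $\delta_x$) exactly when $x\in M_i^{\leq}(\delta)$. A feasible decomposition corresponds precisely to an $s$--$t$ flow of value $1$: the flow on arc $i\to x$ is the contribution $\delta_{i,x}$, saturating the source arcs enforces each agent's budget, saturating the sink arcs reproduces $\delta$, and the admissible arcs respect the support constraint. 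Since $\sum_{i\in N}1/n=1=\sum_{x\in A}\delta_x$, such a decomposition exists if and only if the maximum $s$--$t$ flow has value $1$, which can be tested in polynomial (in fact strongly polynomial) time, with the accepting flow yielding the decomposition itself.

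The correctness argument splits into the two directions of this correspondence, both routine: any feasible flow yields individual distributions with the correct marginals and the required support, hence an NE by \Cref{lem:l1_nash_condition}; conversely, any NE decomposition of $\delta$ is itself an admissible flow of value $1$. The one point to verify carefully is the equivalence between the budget/marginal constraints and saturation of the source and sink arcs, together with the observation that rational inputs give a rational flow, so the returned contributions are well-defined rationals of polynomial encoding length. An equivalent but slightly less efficient route is to phrase the four constraint families above directly as a linear feasibility program and solve it in polynomial time; I expect the flow formulation to be cleaner and to yield the stronger strongly-polynomial guarantee. I do not anticipate a genuine obstacle beyond setting up the transportation instance correctly, since the structural characterization was already supplied by \Cref{lem:l1_nash_condition}.
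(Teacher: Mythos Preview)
Your proposal is correct. Both you and the paper reduce the question to the same feasibility problem via \Cref{lem:l1_nash_condition}: decide whether there exist nonnegative $\delta_{i,x}$ with row sums $1/n$, column sums $\delta_x$, and $\delta_{i,x}=0$ whenever $\delta_x>p_{i,x}$. The paper solves this directly as a linear feasibility program (encoding the support constraint as $d_{i,x}(p_{i,x}-\delta_x)\ge 0$, which is linear since $p_{i,x}-\delta_x$ is a constant), whereas your primary route casts the same transportation instance as a bipartite max-flow and checks whether the maximum value equals $1$. Your formulation buys a strongly-polynomial running time and immediately returns the decomposing flow; the paper's LP route is simpler to state but only weakly polynomial. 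Since you also mention the LP alternative explicitly, the two write-ups are essentially interchangeable, with yours being a mild refinement.
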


\begin{proof}
Let $\delta$ be the overall distribution. 
We now consider the following linear constraints to find individual distributions constituting an NE where $d_{i, x}$ are the variables for all $i \in N$ and $x \in A$:
\begin{align*}
\sum_{x \in A}d_{i, x} \leq \frac{1}{n} \text{ for all } i \in N, \\
\sum_{i \in N}d_{i, x} = \delta_x \text{ for all } x \in A, \\
d_{i, x} (p_{i, x} - \delta_x) \geq 0  \text{ for all } i \in N \text{ and } x \in A,\\
d_{i, x} \geq 0 \text{ for all } i \in N \text{ and } x \in A.
\end{align*}
Let $d_{i, x}$ be the individual distribution of agent $i$ for alternative $x$, i.e., $d_{i, x} = \delta_{i,  x}$ for all $i \in N$ and $x \in A$.
The third constraint requires that an agent never contributes to an oversupplied alternative, which is a necessary and sufficient condition for a set of strategy profiles to form an NE (see \Cref{lem:l1_nash_condition}).
Furthermore, the second constraint ensures that individual strategies $(\delta_{i})_{i \in N}$ form the overall distribution $\delta$. 
Therefore, if these constraints are satisfied, the individual strategies add up to the overall distribution, which is in turn an equilibrium distribution due to the third constraint.
\end{proof}

\section{Weighted agents}
\label{sec:weighted}
In practice, one might want to assign different shares to the agents, e.g., when they represent groups of citizens of different sizes, or citizens in different tax brackets.%

We define a 
\emph{weighted budget-aggregation game} as a tuple $(N,A,\allowbreak(u_i)_{i \in N},(w_i)_{i\in N})$,
where $N$, $A$ and $(u_i)_{i \in N}$ are as in \Cref{def:game}, and $w_i$ is the weight of agent $i$, with $\sum_{i\in N}w_i = 1$.
The strategy set of each agent $i$ is given by
\begin{align*}
S_i(w_i) := \{\delta_i \in [0,1]^m \text{ such that } \sum_{x \in A}\delta_{i,x}=w_i\}\text{.}
\end{align*}
The definition of an NE transfers directly from \Cref{def:Nasheq}, except that $S_i$ depends on $w_i$ as stated above.

For some classes of utility functions, the weighted setting can be reduced to the unweighted setting by agent cloning if all weights are rational numbers, resulting in a pseudo-polynomial time algorithm. We formalize the agent cloning process below.
\begin{definition}
Given a weighted budget-aggregation game  $(N,A,\allowbreak(u_i)_{i \in N},(w_i)_{i\in N})$ where all $w_i$ are rational numbers,
let $D$ be their least common denominator, such that $w_i = q_i/D$ for all $i\in N$, and all $q_i$ are integers.
The corresponding \emph{cloned game} is an (unweighted) budget-ag\-gre\-gation game in which the set of agents $N^c$ contains $q_i$ clones of each agent $i$ (with the same utility function $u_i$), so that $\displaystyle |N^c|=\sum_{i\in N}q_i = D$.
\end{definition}

\begin{lemma}\label[lemma]{lem:clones}
    Let $(N,A,(u_i)_{i \in N},(w_i)_{i\in N})$ be a weighted bud\-get-ag\-gre\-gation game in which all $w_i$ are rational numbers,
	and all $u_i$ are either linear, Leontief, binary symmetric separable and strictly concave/convex, or $\ell_1$.  
    Let $(\delta_{i^c})_{i^c\in N^c}$ be a Nash equilibrium of the corresponding cloned game.
    Define a strategy profile $(\delta_{i})_{i\in N}$ such that $\delta_i$ is the sum of $\delta_{i^c}$ for all $i^c$ which are clones of $i$.
    Then $(\delta_{i})_{i\in N}$ is a Nash equilibrium of the weighted game.
\end{lemma}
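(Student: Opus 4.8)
The plan is to argue by contraposition: I will assume the merged profile $(\delta_i)_{i \in N}$ is \emph{not} a Nash equilibrium of the weighted game and derive that $(\delta_{i^c})_{i^c \in N^c}$ could not have been a Nash equilibrium of the cloned game. First I would dispose of the bookkeeping. Each clone $i^c$ controls budget $1/D$, so $\delta_i = \sum_{i^c}\delta_{i^c}$ has total mass $q_i/D = w_i$ and lies in $S_i(w_i)$; moreover the overall distribution $\delta := \sum_{i\in N}\delta_i = \sum_{i^c\in N^c}\delta_{i^c}$ is literally the same object in both games, so every $u_i(\delta)$ is unchanged by the merge. This reduces the lemma to a statement about best responses: it suffices to show that whenever some agent $i$ has a profitable deviation in the weighted game, one of its clones has a profitable deviation in the cloned game.

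So suppose agent $i$ can deviate to $\delta_i' \in S_i(w_i)$ with $u_i(\delta'') > u_i(\delta)$, where $\delta'' := \delta - \delta_i + \delta_i'$. The core idea is to split this single ``large'' deviation of mass $w_i$ into $q_i$ ``small'' deviations, one per clone, and to exploit Condition \eqref{eq:concavity}. Concretely, consider for each clone $i^c$ the candidate deviation $\bar\delta_{i^c} := \tfrac{1}{q_i}\delta_i'$, which is feasible since it is nonnegative and has mass $w_i/q_i = 1/D$. Writing $\hat\delta^{i^c} := \delta - \delta_{i^c} + \tfrac{1}{q_i}\delta_i'$ for the overall distribution that results when clone $i^c$ alone switches to $\bar\delta_{i^c}$, a direct computation gives
\[
\frac{1}{q_i}\sum_{i^c}\hat\delta^{i^c} = \delta + \frac{1}{q_i}(\delta_i' - \delta_i) = \Big(1-\tfrac{1}{q_i}\Big)\delta + \tfrac{1}{q_i}\delta'',
\]
which is the point at parameter $t = 1/q_i$ on the segment from $\delta$ to $\delta''$. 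Since $u_i(\delta'') > u_i(\delta)$ and $1/q_i \in (0,1)$ (the case $q_i=1$ being trivial, as the clone then coincides with the agent), Condition \eqref{eq:concavity} yields that this average of the $\hat\delta^{i^c}$ has utility strictly above $u_i(\delta)$.

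The remaining step — and the one I expect to be the main obstacle — is to pass from ``the \emph{average} of the clone outcomes $\hat\delta^{i^c}$ beats $u_i(\delta)$'' to ``some \emph{individual} clone outcome $\hat\delta^{i^c}$ beats $u_i(\delta)$,'' since only the latter contradicts the Nash property of the cloned game, which forbids any single clone from improving. This implication is genuinely delicate: it does not follow from Jensen's inequality, which for a concave $u_i$ points the wrong way (it bounds the mean of the utilities above by the utility of the mean). I would therefore try to extract the conclusion from the geometry of the sublevel set $\{p : u_i(p) \le u_i(\delta)\}$ together with the semi-strict quasiconcavity encoded in \eqref{eq:concavity}, rather than from averaging of values, and I would allocate particular care to the borderline case in which several clones attain exactly $u_i(\delta)$. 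This is also the place where I would scrutinize whether mere concavity is enough or whether the stronger \emph{strict quasiconcavity} alternative listed before the lemma is what actually drives the argument; getting this step right is the crux of the whole proof.
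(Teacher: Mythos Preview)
Your bookkeeping and the averaging identity are correct, but the obstacle you flag is real and is not closable along the route you sketch. The implication ``$u_i(\text{average of the }\hat\delta^{i^c}) > u_i(\delta)$ implies $u_i(\hat\delta^{i^c}) > u_i(\delta)$ for some $i^c$'' fails already for concave functions (hence for functions satisfying Condition~\eqref{eq:concavity}): take $u(x,y)=x-y^2$, reference point $(0,0)$ with value $0$, and the two points $(\tfrac12,1),(\tfrac12,-1)$; their average $(\tfrac12,0)$ has value $\tfrac12$, yet each point has value $-\tfrac12$. Semi-strict quasiconcavity controls what happens on a segment between a high point and a low point; it says nothing about a segment joining two \emph{low} points whose midpoint happens to be high, so your proposed appeal to the geometry of the sublevel set does not rescue the step. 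The specific form of your $\hat\delta^{i^c}$'s carries no structure that rules this configuration out, and asking for strict quasiconcavity instead does not help either.

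The paper bypasses averaging altogether. After fixing a profitable deviation $\delta'_i$ with $u_i(\delta')>u_i(\delta)$ (where $\delta'=\delta-\delta_i+\delta'_i$), it applies Condition~\eqref{eq:concavity} \emph{once}, directly to the segment $[\delta,\delta']$: every intermediate point $\delta''=(1-t)\delta+t\delta'$ already satisfies $u_i(\delta'')>u_i(\delta)$. One then takes $t$ so small that the change $\delta''-\delta=t(\delta'_i-\delta_i)$ has $\ell_1$-norm at most $2/D$, i.e., small enough to be carried by a single clone's budget of $1/D$, and hands that change to one clone. The role of~\eqref{eq:concavity} is thus to \emph{shrink} a large profitable move into one a single clone can afford --- not to compare a mean to its constituents. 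You should drop the averaging scheme and rebuild the argument along these lines.
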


\begin{proof}
Suppose that $(\delta_{i})_{i\in N}$ is not an NE of the weighted game.
For linear, Leontief, and $l_1$ preferences, there is a structural lemma that implies that, under $\delta$, some agent $i$ contributes a positive amount to some project $y$ in a set of ``forbidden'' projects that depends only on $u_i$ and $\delta$. 
Hence, under $\delta$, some clone $i_c$ of $i$ must contribute a positive amount to the same project $y$.
As $i_c$ and $i$ have the same utility function, the set of ``forbidden'' projects under $\delta$ is the same for both of them; as $i_c$ contributes to a ``forbidden'' project,   $\delta$ does not correspond to an NE of the cloned game.

Specifically:
\begin{itemize}
	\item For linear utilities, 
	by \Cref{lem:StructureNashLinear},
	some agent $i$ contributes under $\delta$ to some non-highest-valued project $y \in A\setminus \Amax_i$;
	\item For Leontief utilities, by \Cref{lem:eq-iff-critical}, some agent $i$  contributes under $\delta$ to some non-critical project $y\in A\setminus  T_{\delta,i}$;
	\item For $\ell_1$ disutilities, by \Cref{lem:l1_nash_condition}, some agent $i$  contributes under $\delta$ to some over-supplied project $y\in M^{>}_i(\delta)$. 
\end{itemize}

It remains to prove the lemma for binary symmetric separable and strictly concave/convex preferences. 

\Cref{thm:Leontiefstrictly concave} remains true for the weighted game \citep{BGSS23a}. Thus, some agent in the cloned game has to contribute to a non-critical alternative under $(\delta_{i})_{i \in N}$ in the cloned game and has a beneficial deviation, showing that the statement also holds for binary symmetric separable and strictly concave preferences. 

Finally, if $(\delta_{i})_{i\in N}$ is no NE of the weighted game for binary symmetric separable and strictly convex preferences, this means that some agent $i$ contributes either to an alternative outside of $M_{\delta,i}$ or to two alternatives $x,y \in M_{\delta,i}$. For the first case, this implies that there exists a clone of agent $i^c$ that contributes to an alternative outside of $M_{\delta,i^c}$ in the cloned game. \Cref{cor:equalbudget} proves that also for the second case, $(\delta_i)_{i \in N}$ cannot refer to an NE of the cloned game.\qedhere
\end{proof}

\begin{theorem}
\label[theorem]{thm:weighted}
A Nash equilibrium can be computed in pseudo-polynomial time
for weighted budget-aggregation games in which all utilities are 
either linear, Leontief, binary symmetric separable and strictly concave/convex, or $\ell_1$. 
\end{theorem}
\begin{proof}
Given a weighted budget-aggregation game with weights $w_i = q_i / D$, construct the corresponding cloned game, which is an unweighted game. This game has $D$ agents; hence, an NE can be computed in time polynomial in $m$ and $D$. 
By \Cref{lem:clones}, this NE corresponds to an NE for the weighted game; the run-time is pseudo-polynomial in the parameters of that game.
\end{proof}

We believe \Cref{lem:clones} can be extended to other classes of utilities. It would be interesting to characterize the set of utility functions for which \Cref{lem:clones} remains true. 
Moreover, it is open whether an NE in a weighted budget-aggregation game can be computed in time polynomial in $n$ and $m$ (rather than pseudo-polynomial), e.g., for Leontief preferences.

\section{Conclusion}
In this paper, we introduced budget-aggregation games to formalize a game-theoretic distributed approach for dividing a budget among a set of public projects and investigated NEs of the resulting games. Although finding NEs is known to be hard for general classes of utility models, our work shows that in practice, they can be computed efficiently. In detail, we proved that for the most common preference models of linear, Leontief, and $\ell_1$ preferences as well as for classes of binary symmetric and separable preferences, NEs can be computed in polynomial time.

It would be interesting to see whether there exists a class of utility functions containing all of the previously considered preference models that still allow for an efficient computation of NEs. As a first step, one might want to extend binary symmetric separable utilities to allow for valuations other than $0$ or $1$, or consider preferences based on $\ell_p$ metrics other than $\ell_1$.

Other possible directions for future work include the selection of specific equilibria, e.g., by incorporating additional axioms like Pareto efficiency or investigating limits of equilibrium dynamics, and the investigation of the price of anarchy.

\section*{Acknowledgments}
We thank Yakov Babichenko, Felix Brandt, Warut Suksompong, and the anonymous reviewers of AAMAS 2026 for helpful comments. Patrick is supported by the Deutsche Forschungsgemeinschaft under grant BR 2312/14-1.
Erel is supported by the Israel Science Foundation grant no. 1092/24.

\bibliographystyle{abbrvnat}

\begin{thebibliography}{98}
\providecommand{\natexlab}[1]{#1}
\providecommand{\url}[1]{\texttt{#1}}
\expandafter\ifx\csname urlstyle\endcsname\relax
  \providecommand{\doi}[1]{doi: #1}\else
  \providecommand{\doi}{doi: \begingroup \urlstyle{rm}\Url}\fi

\bibitem[Ackermann et~al.(2008)Ackermann, R{\"o}glin, and V{\"o}cking]{ackermann2008impact}
H.~Ackermann, H.~R{\"o}glin, and B.~V{\"o}cking.
\newblock On the impact of combinatorial structure on congestion games.
\newblock \emph{Journal of the ACM (JACM)}, 55\penalty0 (6):\penalty0 1--22, 2008.

\bibitem[Adsul et~al.(2011)Adsul, Garg, Mehta, and Sohoni]{adsul2011rank}
B.~Adsul, J.~Garg, R.~Mehta, and M.~Sohoni.
\newblock Rank-1 bimatrix games: A homeomorphism and a polynomial time algorithm.
\newblock In \emph{Proceedings of the 43rd Annual ACM Symposium on Theory of Computing (STOC)}, pages 195--204, 2011.

\bibitem[Arrow and Debreu(1954)]{ArDe54a}
K.~J. Arrow and G.~Debreu.
\newblock Existence of an equilibrium for a competitive economy.
\newblock \emph{Econometrica}, 22\penalty0 (3):\penalty0 265--290, 1954.

\bibitem[Aziz and Shah(2021)]{AzSh20a}
H.~Aziz and N.~Shah.
\newblock Participatory budgeting: Models and approaches.
\newblock In T.~Rudas and P.~G\'{a}bor, editors, \emph{Pathways between Social Science and Computational Social Science: Theories, Methods and Interpretations}. Springer, 2021.

\bibitem[Aziz et~al.(2020)Aziz, Bogomolnaia, and Moulin]{ABM20a}
H.~Aziz, A.~Bogomolnaia, and H.~Moulin.
\newblock Fair mixing: the case of dichotomous preferences.
\newblock \emph{ACM Transactions on Economics and Computation}, 8\penalty0 (4):\penalty0 18:1--18:27, 2020.

\bibitem[Aziz et~al.(2025)Aziz, Lederer, Lu, Suzuki, and Vollen]{ALL+25b}
H.~Aziz, P.~Lederer, X.~Lu, M.~Suzuki, and J.~Vollen.
\newblock Approximately fair and population consistent budget division via simple payment schemes.
\newblock In \emph{Proceedings of the 26h ACM Conference on Economics and Computation (ACM-EC)}, page 349, 2025.

\bibitem[Babichenko(2018)]{babichenko2018fast}
Y.~Babichenko.
\newblock Fast convergence of best-reply dynamics in aggregative games.
\newblock \emph{Mathematics of Operations Research}, 43\penalty0 (1):\penalty0 333--346, 2018.

\bibitem[Bhaskar and Lolakapuri(2018)]{bhaskar2018equilibrium}
U.~Bhaskar and P.~R. Lolakapuri.
\newblock Equilibrium computation in atomic splittable routing games.
\newblock In \emph{26th Annual European Symposium on Algorithms (ESA 2018)}, pages 58--1. Schloss Dagstuhl--Leibniz-Zentrum f{\"u}r Informatik, 2018.

\bibitem[Bil{\`o} and Mavronicolas(2014)]{bilo2014complexity}
V.~Bil{\`o} and M.~Mavronicolas.
\newblock Complexity of rational and irrational {N}ash equilibria.
\newblock \emph{Theory of Computing Systems}, 54\penalty0 (3):\penalty0 491--527, 2014.

\bibitem[Blackorby et~al.(1978)Blackorby, Primont, and Russell]{BPR78a}
C.~Blackorby, D.~Primont, and R.~R. Russell.
\newblock \emph{Duality, Separability and Functional Structure: Theory and Economic Applications}.
\newblock New York: North-Holland, 1978.

\bibitem[Bogomolnaia et~al.(2005)Bogomolnaia, Moulin, and Stong]{BMS05a}
A.~Bogomolnaia, H.~Moulin, and R.~Stong.
\newblock Collective choice under dichotomous preferences.
\newblock \emph{Journal of Economic Theory}, 122\penalty0 (2):\penalty0 165--184, 2005.

\bibitem[Brandl et~al.(2022)Brandl, Brandt, Greger, Peters, Stricker, and Suksompong]{BBP+19a}
F.~Brandl, F.~Brandt, M.~Greger, D.~Peters, C.~Stricker, and W.~Suksompong.
\newblock Funding public projects: {A} case for the {N}ash product rule.
\newblock \emph{Journal of Mathematical Economics}, 99:\penalty0 102585, 2022.

\bibitem[Brandt et~al.(2016)Brandt, Conitzer, Endriss, Lang, and Procaccia]{BCE+15a}
F.~Brandt, V.~Conitzer, U.~Endriss, J.~Lang, and A.~D. Procaccia.
\newblock Introduction to computational social choice.
\newblock In F.~Brandt, V.~Conitzer, U.~Endriss, J.~Lang, and A.~D. Procaccia, editors, \emph{Handbook of Computational Social Choice}, chapter~1. Cambridge University Press, 2016.

\bibitem[Brandt et~al.(2023)Brandt, Greger, Segal-Halevi, and Suksompong]{BGSS23b}
F.~Brandt, M.~Greger, E.~Segal-Halevi, and W.~Suksompong.
\newblock Balanced donor coordination.
\newblock In \emph{Proceedings of the 24th ACM Conference on Economics and Computation (ACM-EC)}, page 299, 2023.

\bibitem[Brandt et~al.(2024)Brandt, Greger, Segal-Halevi, and Suksompong]{BGSS24a}
F.~Brandt, M.~Greger, E.~Segal-Halevi, and W.~Suksompong.
\newblock Optimal budget aggregation with single-peaked preferences.
\newblock In \emph{Proceedings of the 25th ACM Conference on Economics and Computation (ACM-EC)}, page~49, 2024.

\bibitem[Brandt et~al.(2025{\natexlab{a}})Brandt, Greger, Segal-Halevi, and Suksompong]{BGSS23a}
F.~Brandt, M.~Greger, E.~Segal-Halevi, and W.~Suksompong.
\newblock Coordinating charitable donations with {L}eontief preferences.
\newblock \emph{Journal of Economic Theory}, 230:\penalty0 106096, 2025{\natexlab{a}}.

\bibitem[Brandt et~al.(2025{\natexlab{b}})Brandt, Greger, Segal-Halevi, and Suksompong]{balancedArXiv}
F.~Brandt, M.~Greger, E.~Segal-Halevi, and W.~Suksompong.
\newblock Coordinating charitable donations with {L}eontief preferences.
\newblock Technical report, https://arxiv.org/abs/2305.10286, 2025{\natexlab{b}}.

\bibitem[Brânzei and Sandomirskiy(2019)]{branzei2019algorithms}
S.~Brânzei and F.~Sandomirskiy.
\newblock Algorithms for competitive division of chores.
\newblock Technical report, https://arxiv.org/abs/1907.01766, 2019.

\bibitem[Cabannes(2004)]{Caba04a}
Y.~Cabannes.
\newblock Participatory budgeting: a significant contribution to participatory democracy.
\newblock \emph{Environment and Urbanization}, 16\penalty0 (1):\penalty0 27--46, 2004.

\bibitem[Caragiannis et~al.(2015)Caragiannis, Fanelli, Gravin, and Skopalik]{caragiannis2015approximate}
I.~Caragiannis, A.~Fanelli, N.~Gravin, and A.~Skopalik.
\newblock Approximate pure nash equilibria in weighted congestion games: Existence, efficient computation, and structure.
\newblock \emph{ACM Transactions on Economics and Computation}, 3\penalty0 (1):\penalty0 2:1--2:32, 2015.

\bibitem[Chaudhury et~al.(2020)Chaudhury, Garg, McGlaughlin, and Mehta]{chaudhury2020dividing}
B.~R. Chaudhury, J.~Garg, P.~McGlaughlin, and R.~Mehta.
\newblock Dividing bads is harder than dividing goods: On the complexity of fair and efficient division of chores.
\newblock Technical report, https://arxiv.org/abs/2008.00285, 2020.

\bibitem[Chaudhury et~al.(2021)Chaudhury, Garg, McGlaughlin, and Mehta]{chaudhury2021competitive}
B.~R. Chaudhury, J.~Garg, P.~McGlaughlin, and R.~Mehta.
\newblock Competitive allocation of a mixed manna.
\newblock In \emph{Proceedings of the 32nd ACM-SIAM Annual Symposium on Discrete Algorithms (SODA)}, pages 1405--1424, 2021.

\bibitem[Chen et~al.(2004)Chen, Deng, Sun, and Yao]{chen2004fisher}
N.~Chen, X.~Deng, X.~Sun, and A.~C.-C. Yao.
\newblock Fisher equilibrium price with a class of concave utility functions.
\newblock In \emph{Proceedings of the 12th Conference on Annual European Symposium (ESA)}, pages 169--179, 2004.

\bibitem[Chen and Deng(2006)]{chen2006settling}
X.~Chen and X.~Deng.
\newblock Settling the complexity of two-player {N}ash equilibrium.
\newblock In \emph{Proceedings of the 47th Annual IEEE Symposium on Foundations of Computer Science (FOCS)}, pages 261--272, 2006.

\bibitem[Chen and Teng(2009)]{chen2009spending}
X.~Chen and S.-H. Teng.
\newblock Spending is not easier than trading: {O}n the computational equivalence of {F}isher and {A}rrow-{D}ebreu equilibria.
\newblock In \emph{Proceedings of the 20th International Symposium on Algorithms and Computation (ISAAC)}, pages 647--656, 2009.

\bibitem[Chen et~al.(2006)Chen, Deng, and Teng]{chen2006computing}
X.~Chen, X.~Deng, and S.-h. Teng.
\newblock Computing {N}ash equilibria: Approximation and smoothed complexity.
\newblock In \emph{Proceedings of the 47th Annual IEEE Symposium on Foundations of Computer Science (FOCS)}, pages 603--612, 2006.

\bibitem[Chen et~al.(2009)Chen, Dai, Du, and Teng]{chen2009settling}
X.~Chen, D.~Dai, Y.~Du, and S.~Teng.
\newblock Settling the complexity of {A}rrow-{D}ebreu equilibria in markets with additively separable utilities.
\newblock In \emph{Proceedings of the 50th Annual IEEE Symposium on Foundations of Computer Science (FOCS)}, pages 273--282, 2009.

\bibitem[Cheng et~al.(2017)Cheng, Diakonikolas, and Stewart]{cheng2017playing}
Y.~Cheng, I.~Diakonikolas, and A.~Stewart.
\newblock Playing anonymous games using simple strategies.
\newblock In \emph{Proceedings of the 28th Annual ACM-SIAM Symposium on Discrete Algorithms (SODA)}, pages 616--631, 2017.

\bibitem[Chien and Sinclair(2011)]{chien2011convergence}
S.~Chien and A.~Sinclair.
\newblock Convergence to approximate {N}ash equilibria in congestion games.
\newblock \emph{Games and Economic Behavior}, 71\penalty0 (2):\penalty0 315--327, 2011.

\bibitem[Codenotti and Varadarajan(2004)]{codenotti2004efficient}
B.~Codenotti and K.~Varadarajan.
\newblock Efficient computation of equilibrium prices for markets with {L}eontief utilities.
\newblock In \emph{International Colloquium on Automata, Languages, and Programming}, pages 371--382. Springer, 2004.

\bibitem[Codenotti et~al.(2004)Codenotti, Pemmaraju, and Varadarajan]{codenotti2004computation}
B.~Codenotti, S.~Pemmaraju, and K.~Varadarajan.
\newblock The computation of market equilibria.
\newblock \emph{SIGACT News}, 35\penalty0 (4):\penalty0 23--37, 2004.

\bibitem[Codenotti et~al.(2005{\natexlab{a}})Codenotti, McCune, Penumatcha, and Varadarajan]{codenotti2005market}
B.~Codenotti, B.~McCune, S.~Penumatcha, and K.~Varadarajan.
\newblock Market equilibrium for {CES} exchange economies: Existence, multiplicity, and computation.
\newblock In \emph{Proceedings of the 25th IARCS Annual Conference on Foundations of Software Technology and Theoretical Computer Science (FSTTCS)}, pages 505--516, 2005{\natexlab{a}}.

\bibitem[Codenotti et~al.(2005{\natexlab{b}})Codenotti, Pemmaraju, and Varadarajan]{codenotti2005polynomial}
B.~Codenotti, S.~Pemmaraju, and K.~Varadarajan.
\newblock On the polynomial time computation of equilibria for certain exchange economies.
\newblock In \emph{Proceedings of the 16th Annual ACM-SIAM Symposium on Discrete Algorithms (SODA)}, pages 72--81, 2005{\natexlab{b}}.

\bibitem[Codenotti et~al.(2006)Codenotti, Leoncini, and Resta]{codenotti2006efficient}
B.~Codenotti, M.~Leoncini, and G.~Resta.
\newblock Efficient computation of {N}ash equilibria for very sparse win-lose bimatrix games.
\newblock In \emph{Proceedings of the 14th Conference on Annual European Symposium (ESA)}, pages 232--243, 2006.

\bibitem[Cominetti et~al.(2009)Cominetti, Correa, and Stier-Moses]{cominetti2009impact}
R.~Cominetti, J.~R. Correa, and N.~E. Stier-Moses.
\newblock The impact of oligopolistic competition in networks.
\newblock \emph{Operations Research}, 57\penalty0 (6):\penalty0 1421--1437, 2009.

\bibitem[Cornes and Harley(2012)]{cornes2012fully}
R.~Cornes and R.~Harley.
\newblock Fully aggregative games.
\newblock \emph{Economics Letters}, 116:\penalty0 631--633, 2012.

\bibitem[Cummings et~al.(2015)Cummings, Kearns, Roth, and Wu]{cummings2015privacy}
R.~Cummings, M.~Kearns, A.~Roth, and Z.~S. Wu.
\newblock Privacy and truthful equilibrium selection for aggregative games.
\newblock In \emph{Proceedings of the 11th International Conference on Web and Internet Economics (WINE)}, pages 286--299, 2015.

\bibitem[Curtis~Eaves(1985)]{curtis1985finite}
B.~Curtis~Eaves.
\newblock Finite solution of pure trade markets with {C}obb-{D}ouglas utilities.
\newblock \emph{Economic Equilibrium: Model Formulation and Solution}, pages 226--239, 1985.

\bibitem[Curtis~Eaves(2009)]{curtis2009finite}
B.~Curtis~Eaves.
\newblock Finite solution of pure trade markets with {C}obb-{D}ouglas utilities.
\newblock In \emph{Economic Equilibrium: Model Formulation and Solution}, pages 226--239. Springer, 2009.

\bibitem[Daskalakis and Papadimitriou(2007)]{daskalakis2007computing}
C.~Daskalakis and C.~Papadimitriou.
\newblock Computing equilibria in anonymous games.
\newblock In \emph{Proceedings of the 48th Annual IEEE Symposium on Foundations of Computer Science (FOCS)}, pages 83--93, 2007.

\bibitem[Daskalakis and Papadimitriou(2015)]{daskalakis2015approximate}
C.~Daskalakis and C.~H. Papadimitriou.
\newblock Approximate {N}ash equilibria in anonymous games.
\newblock \emph{Journal of Economic Theory}, 156:\penalty0 207--245, 2015.

\bibitem[{de Berg} et~al.(2024){de Berg}, Freeman, {Schmidt-Kraepelin}, and Utke]{BFS+24a}
M.~{de Berg}, R.~Freeman, U.~{Schmidt-Kraepelin}, and M.~Utke.
\newblock Truthful budget aggregation: Beyond moving-phantom mechanisms.
\newblock Technical report, https://arxiv.org/pdf/2405.20303, 2024.

\bibitem[Debreu(1952)]{Debr52a}
G.~Debreu.
\newblock A social equilibrium existence theorem.
\newblock \emph{Proceedings of the National Academy of Sciences (PNAS)}, 38\penalty0 (10):\penalty0 886--893, 1952.

\bibitem[Debreu(1959)]{Debr59b}
G.~Debreu.
\newblock Topological methods in cardinal utility theory.
\newblock Cowles Foundation Discussion Paper~76, Cowles Foundation at Yale University, 1959.

\bibitem[Deligkas et~al.(2017)Deligkas, Fearnley, Savani, and Spirakis]{deligkas2017computing}
A.~Deligkas, J.~Fearnley, R.~Savani, and P.~Spirakis.
\newblock Computing approximate {N}ash equilibria in polymatrix games.
\newblock \emph{Algorithmica}, 77\penalty0 (2):\penalty0 487--514, 2017.

\bibitem[Deng et~al.(2002)Deng, Papadimitriou, and Safra]{deng2002complexity}
X.~Deng, C.~Papadimitriou, and S.~Safra.
\newblock On the complexity of equilibria.
\newblock In \emph{Proceedings of the 34th Annual ACM Symposium on Theory of Computing (STOC)}, pages 67--71, 2002.

\bibitem[Devanur and Kannan(2008)]{devanur2008marketfixed}
N.~R. Devanur and R.~Kannan.
\newblock Market equilibria in polynomial time for fixed number of goods or agents.
\newblock In \emph{Proveedings of the 49th Annual IEEE Symposium on Foundations of Computer Science}, pages 45--53, 2008.

\bibitem[Devanur et~al.(2008)Devanur, Papadimitriou, Saberi, and Vazirani]{devanur2008marketdual}
N.~R. Devanur, C.~H. Papadimitriou, A.~Saberi, and V.~V. Vazirani.
\newblock Market equilibrium via a primal--dual algorithm for a convex program.
\newblock \emph{Journal of the ACM}, 55\penalty0 (5):\penalty0 22:1--22:18, 2008.

\bibitem[Dubey et~al.(2006)Dubey, Haimanko, and Zapechelnyuk]{dubey2006strategic}
P.~Dubey, O.~Haimanko, and A.~Zapechelnyuk.
\newblock Strategic complements and substitutes, and potential games.
\newblock \emph{Games and Economic Behavior}, 54\penalty0 (1):\penalty0 77--94, 2006.

\bibitem[Elkind et~al.(2006)Elkind, Goldberg, and Goldberg]{elkind2006nash}
E.~Elkind, L.~A. Goldberg, and P.~Goldberg.
\newblock Nash equilibria in graphical games on trees revisited.
\newblock In \emph{Proceedings of the 7th ACM Conference on Electronic Commerce (ACM-EC)}, pages 100--109, 2006.

\bibitem[Elkind et~al.(2026)Elkind, Greger, Lederer, Suksompong, and Teh]{EGL+26a}
E.~Elkind, M.~Greger, P.~Lederer, W.~Suksompong, and N.~Teh.
\newblock Settling the score: Portioning with cardinal preferences.
\newblock \emph{Artificial Intelligence}, 352:\penalty0 104487, 2026.

\bibitem[Etessami and Yannakakis(2010)]{etessami2010complexity}
K.~Etessami and M.~Yannakakis.
\newblock On the complexity of {N}ash equilibria and other fixed points.
\newblock \emph{SIAM Journal on Computing}, 39\penalty0 (6):\penalty0 2531--2597, 2010.

\bibitem[Fabrikant et~al.(2004)Fabrikant, Papadimitriou, and Talwar]{fabrikant2004complexity}
A.~Fabrikant, C.~Papadimitriou, and K.~Talwar.
\newblock The complexity of pure {N}ash equilibria.
\newblock In \emph{Proceedings of the 36th Annual ACM Symposium on Theory of Computing (STOC)}, pages 604--612, 2004.

\bibitem[Freeman et~al.(2019)Freeman, Pennock, Peters, and Vaughan]{FPP+19a}
R.~Freeman, D.~M. Pennock, D.~Peters, and J.~W. Vaughan.
\newblock Truthful aggregation of budget proposals.
\newblock In \emph{Proceedings of the 2019 ACM Conference on Economics and Computation (ACM-EC)}, pages 751--752, 2019.

\bibitem[Freeman et~al.(2021)Freeman, Pennock, Peters, and {Wortman Vaughan}]{FPPV21a}
R.~Freeman, D.~M. Pennock, D.~Peters, and J.~{Wortman Vaughan}.
\newblock Truthful aggregation of budget proposals.
\newblock \emph{Journal of Economic Theory}, 193:\penalty0 105234, 2021.

\bibitem[Garg and McGlaughlin(2020)]{garg2020computing}
J.~Garg and P.~McGlaughlin.
\newblock Computing competitive equilibria with mixed manna.
\newblock In \emph{Proceedings of the 19th International Conference on Autonomous Agents and Multiagent Systems (AAMAS)}, pages 420--428, 2020.

\bibitem[Garg et~al.(2017)Garg, Mehta, Vazirani, and Yazdanbod]{garg2017settling}
J.~Garg, R.~Mehta, V.~V. Vazirani, and S.~Yazdanbod.
\newblock Settling the complexity of {L}eontief and {PLC} exchange markets under exact and approximate equilibria.
\newblock In \emph{Proceedings of the 49th Annual ACM Symposium on Theory of Computing (STOC)}, pages 890--901, 2017.

\bibitem[Goel et~al.(2019)Goel, Krishnaswamy, Sakshuwong, and Aitamurto]{GKSA19a}
A.~Goel, A.~K. Krishnaswamy, S.~Sakshuwong, and T.~Aitamurto.
\newblock Knapsack voting for participatory budgeting.
\newblock \emph{ACM Transactions on Economics and Computation}, 7\penalty0 (2):\penalty0 8:1--8:27, 2019.

\bibitem[Haret et~al.(2024)Haret, Klumper, Maly, and Sch{\"a}fer]{HKMS24a}
A.~Haret, S.~Klumper, J.~Maly, and G.~Sch{\"a}fer.
\newblock Committees and equilibria: Multiwinner approval voting through the lens of budgeting games.
\newblock In \emph{Proceedings of the 25th ACM Conference on Economics and Computation (ACM-EC)}, pages 51--70, 2024.

\bibitem[Harks and Timmermans(2022)]{harks2022equilibrium}
T.~Harks and V.~Timmermans.
\newblock Equilibrium computation in resource allocation games.
\newblock \emph{Mathematical Programming}, 194\penalty0 (1):\penalty0 1--34, 2022.

\bibitem[Huang(2013)]{huang2013collusion}
C.-C. Huang.
\newblock Collusion in atomic splittable routing games.
\newblock \emph{Theory of Computing Systems}, 52\penalty0 (4):\penalty0 763--801, 2013.

\bibitem[Jain(2007)]{Jain07a}
K.~Jain.
\newblock A polynomial time algorithm for computing an {A}rrow--{D}ebreu market equilibrium for linear utilities.
\newblock \emph{SIAM Journal on Computing}, 37\penalty0 (1):\penalty0 303--318, 2007.

\bibitem[Janson(2016)]{Jans16a}
S.~Janson.
\newblock {P}hragm{\'e}n's and {T}hiele's election methods.
\newblock Technical Report arXiv:1611.08826 [math.HO], arXiv.org, 2016.

\bibitem[Jensen(2010)]{jensen2010aggregative}
M.~Jensen.
\newblock Aggregative games and best-reply potentials.
\newblock \emph{Economic Theory}, 43:\penalty0 45--66, 2010.

\bibitem[Jensen(2018)]{jensen2018aggregative}
M.~Jensen.
\newblock Aggregative games.
\newblock In \emph{Handbook of Game Theory and Industrial Organization, Volume I}, pages 66--92. Edward Elgar Publishing, 2018.

\bibitem[Kearns and Mansour(2002)]{kearns2002efficient}
M.~Kearns and Y.~Mansour.
\newblock Efficient {N}ash computation in large population games with bounded influence.
\newblock In \emph{Proceedings of the 18th Conference on Uncertainty in Artificial Intelligence}, pages 259--266, 2002.

\bibitem[Kearns et~al.(2001)Kearns, Littman, and Singh]{kearns2001graphical}
M.~Kearns, M.~L. Littman, and S.~Singh.
\newblock Graphical models for game theory.
\newblock \emph{Proceedings of the 17th Conference on Uncertainty in Artificial Intelligence}, pages 253--260, 2001.

\bibitem[Klimm and Warode(2025)]{klimm2025complexity}
M.~Klimm and P.~Warode.
\newblock Complexity and parametric computation of equilibria in atomic splittable congestion games via weighted block laplacians.
\newblock \emph{SIAM Journal on Computing}, 54\penalty0 (5):\penalty0 1241--1293, 2025.

\bibitem[Kukushkin(1994)]{kukushkin1994fixed}
N.~S. Kukushkin.
\newblock A fixed-point theorem for decreasing mappings.
\newblock \emph{Economics Letters}, 46\penalty0 (1):\penalty0 23--26, 1994.

\bibitem[Lackner and Skowron(2023)]{LaSk23a}
M.~Lackner and P.~Skowron.
\newblock \emph{Multi-winner voting with approval preferences}.
\newblock Springer Nature, 2023.

\bibitem[Lemke and Howson(1964)]{lemke1964equilibrium}
C.~E. Lemke and J.~T. Howson.
\newblock Equilibrium points of bimatrix games.
\newblock \emph{SIAM Journal on Applied Mathematics}, 12\penalty0 (2):\penalty0 413--423, 1964.

\bibitem[Li et~al.(2025)Li, Li, Ran, Zheng, and Huang]{li2025survey}
H.~Li, J.~Li, L.~Ran, L.~Zheng, and T.~Huang.
\newblock A survey of distributed algorithms for aggregative games.
\newblock \emph{IEEE/CAA Journal of Automatica Sinica}, 12\penalty0 (5):\penalty0 859--871, 2025.

\bibitem[Lindner(2011)]{Lind11a}
T.~Lindner.
\newblock \emph{Zur Manipulierbarkeit der Allokation {\"o}ffentlicher G{\"u}ter: Theoretische Analyse und Simulationsergebnisse}.
\newblock PhD thesis, Karlsruhe Institute of Technology, 2011.

\bibitem[Lipton et~al.(2003)Lipton, Markakis, and Mehta]{lipton2003playing}
R.~J. Lipton, E.~Markakis, and A.~Mehta.
\newblock Playing large games using simple strategies.
\newblock In \emph{Proceedings of the 4th ACM conference on Electronic Commerce}, pages 36--41, 2003.

\bibitem[McKenzie(1959)]{McKe59a}
L.~W. McKenzie.
\newblock On the existence of general equilibrium for a competitive market.
\newblock \emph{Econometrica}, 27\penalty0 (1):\penalty0 54--71, 1959.

\bibitem[Merrill(1972)]{merrill1972applications}
O.~H. Merrill.
\newblock \emph{Applications and Extensions of an algorithm that computes fixed points of certain upper semi-continuous point to set mappings}.
\newblock PhD thesis, University of Michigan, 1972.

\bibitem[Nash(1950)]{Nash50a}
J.~F. Nash.
\newblock Equilibrium points in $n$-person games.
\newblock \emph{Proceedings of the National Academy of Sciences (PNAS)}, 36:\penalty0 48--49, 1950.

\bibitem[Nash(1951)]{Nash51a}
J.~F. Nash.
\newblock Non-cooperative games.
\newblock \emph{Annals of Mathematics}, 54\penalty0 (2):\penalty0 286--295, 1951.

\bibitem[Orda et~al.(2002)Orda, Rom, and Shimkin]{orda2002competitive}
A.~Orda, R.~Rom, and N.~Shimkin.
\newblock Competitive routing in multiuser communication networks.
\newblock \emph{IEEE/ACM Transactions on networking}, 1\penalty0 (5):\penalty0 510--521, 2002.

\bibitem[Orlin(2010)]{orlin2010improved}
J.~B. Orlin.
\newblock Improved algorithms for computing {F}isher's market clearing prices.
\newblock In \emph{Proceedings of the 42nd Annual ACM Symposium on Theory of Computing (STOC)}, pages 291--300, 2010.

\bibitem[Orzech and Rinard(2025)]{orzech2025nash}
E.~Orzech and M.~Rinard.
\newblock {N}ash equilibria with irradical probabilities.
\newblock Technical report, https://arxiv.org/abs/2507.09422, 2025.

\bibitem[Papadimitriou et~al.(2023)Papadimitriou, Vlatakis-Gkaragkounis, and Zampetakis]{papadimitriou2023computational}
C.~Papadimitriou, E.-V. Vlatakis-Gkaragkounis, and M.~Zampetakis.
\newblock The computational complexity of multi-player concave games and {K}akutani fixed points.
\newblock In \emph{Proceedings of the 24th ACM Conference on Economics and Computation}, pages 1045--1045, 2023.

\bibitem[Papadimitriou(1994)]{Papa94b}
C.~H. Papadimitriou.
\newblock On the complexity of the parity argument and other inefficient proofs of existence.
\newblock \emph{Journal of Computer and System Sciences}, 48\penalty0 (3):\penalty0 498--532, 1994.

\bibitem[Parise et~al.(2020)Parise, Grammatico, Gentile, and Lygeros]{parise2020distributed}
F.~Parise, S.~Grammatico, B.~Gentile, and J.~Lygeros.
\newblock Distributed convergence to nash equilibria in network and average aggregative games.
\newblock \emph{Automatica}, 117:\penalty0 108959, 2020.

\bibitem[Peters and Skowron(2020)]{PeSk20a}
D.~Peters and P.~Skowron.
\newblock Proportionality and the limits of welfarism.
\newblock In \emph{Proceedings of the 21nd ACM Conference on Economics and Computation (ACM-EC)}, pages 793--794, 2020.

\bibitem[Peters et~al.(2021)Peters, Pierczy\'nski, and Skowron]{PPS21a}
D.~Peters, G.~Pierczy\'nski, and P.~Skowron.
\newblock Proportional participatory budgeting with additive utilities.
\newblock In \emph{Proceedings of the 35th Annual Conference on Neural Information Processing Systems (NeurIPS)}, pages 12726--12737, 2021.

\bibitem[Phragm{\'e}n(1899)]{Phra99a}
E.~Phragm{\'e}n.
\newblock Till fr{\aa}gan om en proportionell valmetod.
\newblock \emph{Statsvetenskaplig Tidskrift}, 2\penalty0 (2):\penalty0 297--305, 1899.

\bibitem[Porter et~al.(2008)Porter, Nudelman, and Shoham]{porter2008simple}
R.~Porter, E.~Nudelman, and Y.~Shoham.
\newblock Simple search methods for finding a {N}ash equilibrium.
\newblock \emph{Games and Economic Behavior}, 63\penalty0 (2):\penalty0 642--662, 2008.

\bibitem[Rosen(1965)]{rosen1965existence}
J.~B. Rosen.
\newblock Existence and uniqueness of equilibrium points for concave n-person games.
\newblock \emph{Econometrica}, 33\penalty0 (3):\penalty0 520--534, 1965.

\bibitem[Roughgarden and Schoppmann(2015)]{roughgarden2015local}
T.~Roughgarden and F.~Schoppmann.
\newblock Local smoothness and the price of anarchy in splittable congestion games.
\newblock \emph{Journal of Economic Theory}, 156:\penalty0 317--342, 2015.

\bibitem[Rubinstein(2015)]{rubinstein2015inapproximability}
A.~Rubinstein.
\newblock Inapproximability of {N}ash equilibrium.
\newblock In \emph{Proceedings of the 47th Annual ACM Symposium on Theory of Computing}, pages 409--418, 2015.

\bibitem[Scarf(1967)]{scarf1967computation}
H.~E. Scarf.
\newblock On the computation of equilibrium prices.
\newblock Technical Report 232, Cowles Foundation Discussion Papers, 1967.

\bibitem[Tsaknakis and Spirakis(2007)]{tsaknakis2007optimization}
H.~Tsaknakis and P.~G. Spirakis.
\newblock An optimization approach for approximate {N}ash equilibria.
\newblock In \emph{Internet and Network Economics}, pages 42--56. Springer, 2007.

\bibitem[Vazirani(2007)]{vazirani2007combinatorial}
V.~V. Vazirani.
\newblock Combinatorial algorithms for market equilibria.
\newblock In \emph{Algorithmic Game Theory}, chapter~5. Cambridge University Press, 2007.

\bibitem[von Neumann(1928)]{vonneumann1928theorie}
J.~von Neumann.
\newblock Zur {T}heorie der {G}esellschaftsspiele.
\newblock \emph{Mathematische Annalen}, 100\penalty0 (1):\penalty0 295--320, 1928.

\bibitem[von Neumann and Morgenstern(1944)]{vNM44a}
J.~von Neumann and O.~Morgenstern.
\newblock \emph{Theory of Games and Economic Behavior}.
\newblock Princeton University Press, 1944.

\bibitem[Wang and Nedić(2024)]{wang2024differentially}
Y.~Wang and A.~Nedić.
\newblock Differentially private distributed algorithms for aggregative games with guaranteed convergence.
\newblock \emph{IEEE Transactions on Automatic Control}, 69\penalty0 (8):\penalty0 5168--5183, 2024.

\bibitem[Ye(2008)]{ye2008path}
Y.~Ye.
\newblock A path to the {A}rrow--{D}ebreu competitive market equilibrium.
\newblock \emph{Mathematical Programming}, 111\penalty0 (1):\penalty0 315--348, 2008.

\end{thebibliography}

\end{document}